\newtheorem{lemma}{Lemma}[section]
\newcommand{\braketm}[2]{\langle #1|#2\rangle}
\newcommand{\ketbra}[2]{|#1\rangle\langle #2|}
\begin{document}

\title{Long-range photonic device-independent quantum key distribution\\ using SPDC sources and linear optics}
\author{Morteza Moradi}
\affiliation{Institute of Informatics, Faculty of Mathematics, Informatics and Mechanics, University of Warsaw, Banacha 2c, 02--097 Warsaw, Poland}
\author{Maryam Afsary}
\affiliation{Institute of Informatics, Faculty of Mathematics, Informatics and Mechanics, University of Warsaw, Banacha 2c, 02--097 Warsaw, Poland}
\author{Piotr Mironowicz}
\affiliation{Center for Theoretical Physics, Polish Academy of Sciences, Aleja Lotników 32/46, 02-668 Warsaw, Poland}
\affiliation{Faculty of Electronics, Telecommunications and Informatics, Gdańsk University of Technology, Narutowicza 11/12, 80-233 Gdańsk, Poland}
\author{Enky Oudot}
\affiliation{ICFO - Institut de Ciencies Fotoniques, The Barcelona Institute
of Science and Technology, 08860 Castelldefels, Barcelona, Spain}
\affiliation{LIP6, CNRS, Sorbonne Universit\'e, 4 place Jussieu, F-75005 Paris, France}
\author{Magdalena Stobi\'nska-Moretto}
\affiliation{Center for Hybrid Quantum-Classical Information Technologies ``QLAB'', University of Warsaw, Pasteura 5, 02--093 Warsaw, Poland}

\begin{abstract}
 We address the question of the implementation of long-distance device-independent quantum key distribution (DI QKD) by proposing two experimentally viable schemes. Those schemes only use  spontaneous parametric down-conversion (SPDC) sources and linear optics. They achieve favorable key rate scaling proportional to the square root of channel transmittance $\eta_t$, matching the twin-field protocol advantage. We demonstrate positive asymptotic key rates at detector efficiencies as low as 80\%, bringing DI QKD within the reach of current superconducting detector technology. Our security analysis employs the Entropy Accumulation Theorem to establish rigorous finite-size bounds, achieving finite-key rates at a detector efficiency of 90\%. This work represents a critical milestone toward device-independent security in quantum communication networks, providing experimentalists with practical implementation pathways while maintaining the strongest possible security guarantees against quantum adversaries.
\end{abstract}

\maketitle

\noindent
\textit{Introduction.---} Quantum key distribution (QKD) promises information-theoretic security for communications, leveraging quantum mechanics to detect eavesdropping attempts. Among various protocols, device-independent (DI) schemes offer the highest level of security by eliminating the need to trust the devices~\cite{Zapatero2023,Primaatmaja2023}. This remarkable property addresses a critical vulnerability: in practice, imperfect or compromised devices can leak information, as demonstrated by successful attacks on QKD systems~\cite{Gerhardt2011,Lydersen2010,Zhao2008,Weier2011}.

The security of DI QKD is based on a profound insight. The violation of Bell inequality certifies the presence of quantum nonlocality that no classical system can reproduce. By monitoring this violation, users can bound the information accessible to any eavesdropper. The security proof is based solely on observed measurement statistics.

Despite its elegance, implementing DI QKD over long distances faces the challenge of closing the detection loophole while maintaining practical key generation rates. This loophole arises when measurement inefficiencies allow local hidden-variable models to reproduce the observed correlations. For photonic implementations, detector efficiencies exceeding 82.8\% are required for Bell tests with maximally entangled states, although this threshold can be lowered to 66.7\% for other states and specific measurements~\cite{Giustina2013,Morteza}. Moreover, in most QKD protocols key rates decay exponentially with distance due to channel losses, limiting practical distance to 100--150~km in standard fibers~\cite{Pirandola2020}.

Recent experiments demonstrated DI QKD using matter qubits with near-perfect detection efficiency~\cite{NadlingerNature2022,Weinfurter-DIQKD}. Although they validated the principle, they underscored the need for fully photonic solutions compatible with optical fiber network infrastructure. Initial steps have been reported in Ref.~\cite{Pan-DIQKD}, yet achieving long-distance implementation remains an open challenge. In particular, the quantum repeater approach faces fundamental constraints in DI scenarios~\cite{Sadhu2023}. 

A promising idea to overcome distance limitations in QKD is the twin-field (TF) configuration~\cite{Lucamarini2018}. This architecture enables key rates that scale with the square root of channel transmittance, a significant improvement over the linear scaling of conventional QKD. By combining the security of DI QKD with the scaling of TF protocols, DI-TF QKD could enable secure communication over long distances without trusting any network nodes. Such a fully photonic DI-TF QKD scheme has recently been proposed and analyzed with respect to security in Ref.~\cite{Acin2024}, but the setup required quantum dot sources and a nonlinear measurement scheme that poses experimental challenges.

\begin{figure*}[t]\centering
	\raisebox{3.2cm}{(a)}\kern0.1cm\includegraphics[height=3.5cm]{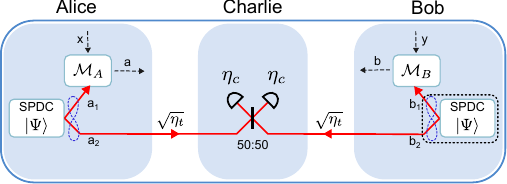}\hfil
	\raisebox{3.2cm}{(b)}\kern0cm\includegraphics[height=3.5cm]{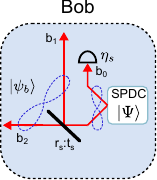}\hfil
	\raisebox{3.2cm}{(c)}\kern0cm\includegraphics[height=3.5cm]{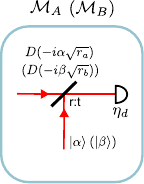}
  \caption{Long-range DI-TF QKD based on heralded entanglement distribution.
  (a) In the 1-photon protocol, Alice and Bob use spontaneous parametric down-conversion (SPDC) sources pumped by pulsed lasers to generate local bipartite multi-photon entanglement -- two-mode squeezed vacuum states (TMSV). The idler modes $a_2$ and $b_2$ travel through lossy channels with transmittance $\sqrt{\eta_t}$ to reach Charlie's central station, which comprises a symmetric beam splitter and two detectors with efficiency $\eta_c$. The events where exactly one of the two detectors registers a photon herald the state in Eq.~\eqref{eq:psiout}.
  (b) In the 2-photon protocol, Bob employs a single-photon source obtained by heralding one output from an SPDC crystal at a detector of efficiency $\eta_s$ with the other output routed through a beam splitter with reflectivity-to-transmissivity ratio $r_s:t_s$, effectively generating local single-photon entanglement $\ket{\psi_b}$.
  (c) The heralded state, Eqs.~\eqref{eq:psiout} or \eqref{eq:psiout2ph}, enters Alice's and Bob's measurement systems $\mathcal{M}_A$ and $\mathcal{M}_B$. Each measurement comprises a displacement operation, implemented by interfering the signal $a_1$ ($b_1$) with a coherent state $\ket{\alpha}$ ($\ket{\beta}$) on a beam splitter with transmissivity $t_{a,b}\approx 1$, followed by a detector with efficiency $\eta_d$. All detectors used discriminate only between `photons' and `no-photons' cases (on/off detection).}
  \label{fig:setup}
\end{figure*}

In this work, we study two DI-TF QKD protocols based on heralded entanglement distribution using spontaneous parametric down-conversion (SPDC) sources, linear optics, and on/off single-photon detectors. The first scheme, demonstrated experimentally in Ref.~\cite{Caspar2020} and further analyzed for Bell experiments in Ref.~\cite{PRA}, produces a maximally entangled single-photon state $\ket{\Psi_\text{out}^\text{(1ph)}}=\frac{1}{\sqrt{2}}\left(\ket{01}\pm i\ket{10}\right)$. We provide a rigorous security proof for a DI QKD protocol using this state; however, the measurement scheme is suboptimal for $\ket{\Psi_\text{out}^\text{(1ph)}}$, resulting in a high detection-efficiency threshold. We refer to it as the \textit{1-photon protocol}.

The second scheme constitutes our main contribution and significantly relaxes experimental requirements: for the first time, we achieve threshold efficiencies compatible with state-of-the-art experiments together with a rigorous security proof. The key ingredient is to herald states of the form $\ket{\Psi_\text{out}^\text{(2ph)}}=\ket{00}+\epsilon\ket{11}$; we call this the \textit{2-photon protocol}. The state $\ket{\Psi_\text{out}^\text{(2ph)}}$ offers two advantages for DI QKD. First, key generation uses $Z$-basis measurements with on/off detectors, yielding perfectly correlated outcomes; losses only weakly affect the error rate, as most events correspond to the vacuum component $\ket{00}$. Second, it allows CHSH violation down to the 66\% loss threshold. While unbalanced states are known to optimize key rates at a given efficiency~\cite{Brown2024}, the CHSH inequality is suboptimal for DI QKD with such states. We therefore develop numerical methods for rigorous security proofs against general quantum attacks, using the Entropy Accumulation Theorem (EAT)~\cite{Dupuis2019,Dupuis2020} with Bell inequalities tailored to $\ket{\Psi_\text{out}^\text{(2ph)}}$.

\bigskip
\noindent
\textit{The protocol.---} The setup we consider is depicted in Fig.~\ref{fig:setup}. Alice and Bob each generate multi-photon entangled states at their respective locations. In each round, they randomly choose one of three measurement settings: two for the nonlocality test and one for key generation. Charlie heralds long-range entanglement and communicates via a classical channel which rounds to keep.
The basic ingredients are two identical two-mode squeezed vacuum (TMSV) sources, obtained by pumping two SPDC crystals. These sources generate the state $\ket{\Psi} = \sum_n \sqrt{\lambda_n} \ket{n,n}_{1,2}$, with photon-number correlations across modes 1 and 2, where $\lambda_n = \tanh^{2n}{(g)}/\cosh^2(g)$ quantifies the probability of $n$-photon emissions in each mode, and $g$ is the parametric gain.
Since vacuum and single-photon events dominate ($\lambda_0 \gg \lambda_1 \gg \lambda_2 \cdots$), for clarity we approximate this state as $\ket{\Psi} \approx \frac{1}{\mathcal{N}} \left(\sqrt{\lambda_0}\ket{00}+ \sqrt{\lambda_1}\ket{11}\right)$; however, all computations and plots in this Letter include the full multiphoton contributions as described in the Supplemental Material (SM).

\bigskip
\noindent
\textit{The 1-photon protocol.---} In the scenario depicted in Fig.~\ref{fig:setup}a, Alice and Bob send one TMSV mode each to Charlie's central station, equidistant from both. Charlie interferes the incoming modes at a symmetric beam splitter and heralds only when exactly one of the two detectors clicks, projecting the distributed state onto high-fidelity entanglement~\cite{PRA,Caspar2020}
\begin{equation}
    \ket{\Psi_\text{out}^\text{(1ph)}} =  \frac{1}{\sqrt{2}}\left(\ket{0,1}_{a_1, b_1} \pm i \ket{1,0}_{a_1, b_1}\right).
    \label{eq:psiout}
\end{equation}
With Charlie's detector efficiency $\eta_c$, photon-number statistics $\lambda^{(a)}$ ($\lambda^{(b)}$) for Alice (Bob), the heralding probability is
$P_h^\text{(1ph)} \!\!=\!\!\left(\lambda_0^{(a)}\lambda_1^{(b)}+\lambda_0^{(b)}\lambda_1^{(a)}\right) \eta_c\sqrt{\eta_t} \!\approx \!O(\sqrt{\eta_t})$.

This protocol achieves breakthrough scaling. At distance $L$, channel transmittance is $\eta_t = 10^{-\alpha_{\text{att}}L/10}$, with attenuation coefficient $\alpha_{\text{att}} = 0.2 \; \mathrm{km}^{-1}$. Since the photons travel half distance to Charlie, the losses scale as $\sqrt{\eta_t}$, versus linear decay ($\eta_t$) in conventional QKD. This square-root improvement allows previously inaccessible distances. 

In practice, Charlie heralds a mixed state $\rho_\text{out}$ rather than the pure state Eq.~\eqref{eq:psiout}, but $\ket{\Psi_\text{out}^\text{(1ph)}}\bra{\Psi_\text{out}^\text{(1ph)}}$ remains dominant. This robustness proves critical for implementations. In our further analysis, we assume perfect heralding detectors at Charlie's ($\eta_c = 1$), as their inefficiency merely reduces protocol success rate without compromising security.

\bigskip
\noindent
\textit{The 2-photon protocol.---} In this second approach, Fig.~\ref{fig:setup}b, Alice's setup remains unchanged while Bob uses a single-photon source, implemented with a heralded SPDC~\cite{Kaneda2016} using a detector of efficiency $\eta_s$. Bob routes the photon through a beam splitter with transmissivity $t_s$, creating a local entangled state $\ket{\psi_b} = \sqrt{t_s}\ket{0,1}_{b_1,b_2}+ e^{i\phi} \,\sqrt{1-t_s}\,\ket{1,0}_{b_1,b_2}$. A single click at Charlie's station heralds an entangled state
\begin{equation}
    \ket{\Psi_\text{out}^\text{(2ph)}} \propto  \sqrt{\lambda_0^{(a)} t_s}\,\ket{0,0}_{a_1, b_1} 
    \pm e^{i\phi} \, \sqrt{\lambda_1^{(a)} (1-t_s)} \ket{1,1}_{a_1, b_1}.
    \label{eq:psiout2ph}
\end{equation}
For small $t_s$, this state takes the form of a two-mode squeezed state, which is known to exhibit strong loss robustness for Bell test with the considered measurement~\cite{Vivoli2016}.

This protocol achieves the same scaling as the 1-photon one, $P_h^\text{(2ph)} \!=\! P_s\left( \lambda_0^{(a)}t_s + \lambda_1^{(a)}(1-t_s) \right)\eta_c\sqrt{\eta_t}\approx O(\sqrt{\eta_t})$,
where $P_s$ is Bob's single-photon generation probability.
For full derivation of density matrices and performance of protocols, see SM, Section~\ref{Entang-sec}.

\bigskip
\noindent
\textit{The nonlocality test.---} To extract the secret key, Alice and Bob perform local measurements $\mathcal{M}_A$ and $\mathcal{M}_B$. Alice selects between two settings $x \in \{1,2\}$ while Bob chooses from three $y \in \{1,2, 3\}$, each yielding binary outcomes $a,b \in \{\pm 1\}$, cf.\ Ref.~\cite{Acin2007}.

For the Bell test, both parties randomly select from their first two settings and evaluate the CHSH parameter~\cite{CHSH}
\begin{equation}
    S = \langle A_1 B_1 \rangle + \langle A_1 B_2 \rangle + \langle A_2 B_1 \rangle - \langle A_2 B_2 \rangle,
\end{equation}
where $A_x$ and $B_y$ denote Alice's and Bob's measurement observables, respectively, with correlations $\langle A_x B_y \rangle = \sum_{a,b} p(a=b|x,y) - p(a \neq b |x,y)$. This parameter quantifies the observed nonlocality and bounds the information accessible to any eavesdropper, satisfying $S \leq 2$ for all local hidden-variable theories.

To violate Bell inequalities, one must perform mutually incompatible measurements. The states in Eqs.~\eqref{eq:psiout} and~\eqref{eq:psiout2ph} are encoded in the Fock basis, $\{\ket{0}, \ket{1}\}$. While measuring in the $z$ basis is straightforward using single-photon detectors, optimal Bell tests require challenging projections onto photon-number superpositions\break $\{\tfrac{1}{\sqrt{2}}(\ket{0}+\ket{1}),\tfrac{1}{\sqrt{2}}(\ket{0}-\ket{1})\}$.
To overcome this constraint, we employ a practical alternative accessible in modern laboratories, inspired by Ref.~\cite{Banaszek1999}. 

Our measurement scheme, Fig.~\ref{fig:setup}c, implements: (i) displacement operations $D(\delta) = e^{\delta a^\dagger - \delta^* a}$ with setting-dependent $\delta_a = -i\alpha\sqrt{r_a}$ and $\delta_b = -i\beta\sqrt{r_b}$, followed by (ii) `photons'/`no-photons' (on/off) detection with efficiency $\eta_d$. The effective positive operator-valued measures (POVMs) are
\begin{equation}
    \mathcal{M}_0^{(A)} = D(\delta_a) E_{0}^{\eta_d} D^{\dagger}(\delta_a),
    \quad
    \mathcal{M}_0^{(B)} = D(\delta_b) E_{0}^{\eta_d} D^{\dagger}(\delta_b),
\end{equation}
where $E_0^{\eta_d} = (1-\eta_d)^{a^\dagger a}$ represents the effective no-click operator. To evaluate the CHSH value under realistic conditions with multimode sources and detection losses, we compute
\begin{equation}
    \begin{aligned}
    p_{a}(0|\delta_{a(b)}) &{}= \mathrm{Tr} \big( \rho_{a(b)} \,\mathcal{M}_0^{A(B)} \big),\\
    p(00|\delta_a\delta_b) &{}= \mathrm{Tr} \big( \rho_{\text{out}}\, \mathcal{M}_0^{(A)} \otimes \mathcal{M}_0^{(B)}),
    \end{aligned}
\end{equation}
with $\rho_{a(b)}$ denotes Alice's (Bob's) reduced density matrix.
The detailed derivations of the marginals and correlators, including higher photon-number contributions, are presented in SM Sections~\ref{POVMs-sec} and~\ref{Distributions-sec}.

Remarkably, both protocols maintain near-maximal CHSH violations even under severe transmission losses: $S = 2.688$ for $\ket{\Psi_\text{out}^\text{(1ph)}}$ and $S=2.686$ for $\ket{\Psi_\text{out}^\text{(2ph)}}$. This robustness stems from the protocol's inherent noise tolerance: under high loss, the heralded state resides in the qubit subspace where violations are strongest. By decreasing the value of $g$, we limit the contribution of heralded multiphoton components and their impact on reduced violations. However, this affects the heralding probability $P_h$, which is crucial in the finite-size regime. Consequently, channel loss primarily affects protocol performance, while the CHSH value $S$ and the asymptotic key rate $r$ remain distance-independent.

\begin{figure}\centering
    \includegraphics[width=\columnwidth]{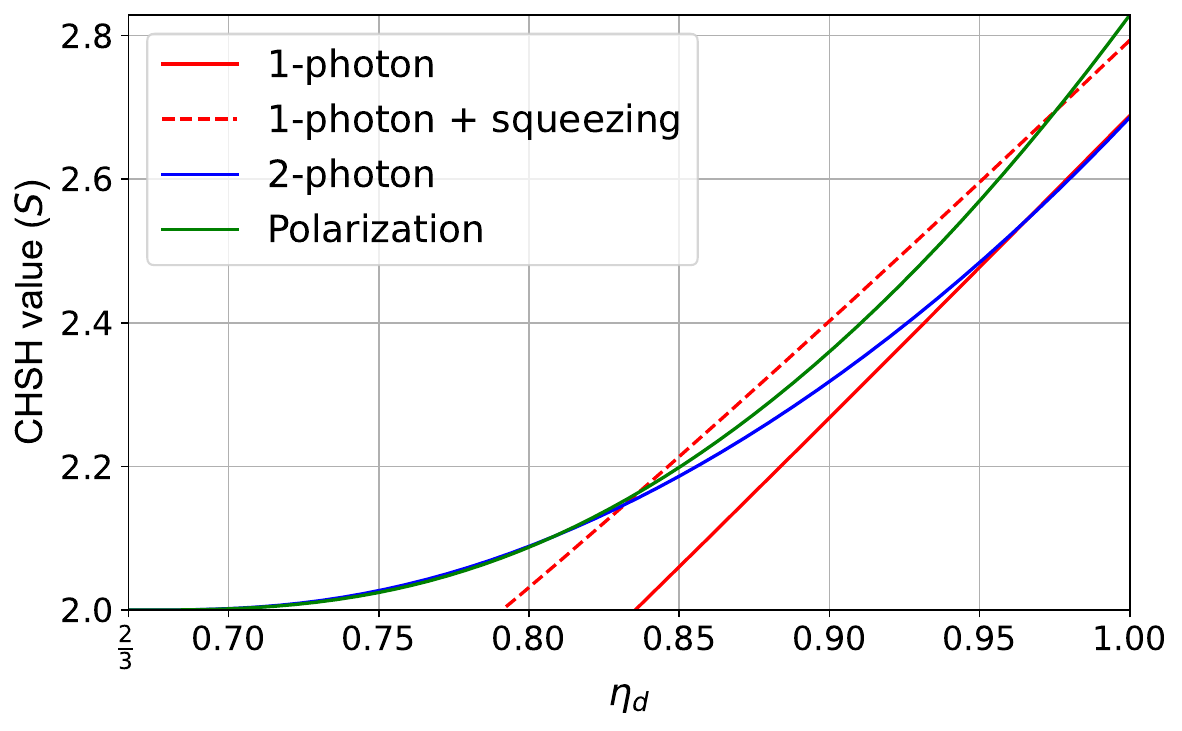}
    \caption{{Maximal CHSH parameter $S$ as a function of detector efficiency $\eta_d$ computed for: 1-photon protocol (red solid line, with dashed red line including squeezing operations in the measurements~\cite{Acin2024}), 2-photon protocol (blue) and, for reference, a polarization-based protocol~\cite{Oudot2024} (green).}}
\label{fig:Bell vs. Eta}
\end{figure}

Fig.~\ref{fig:Bell vs. Eta} illustrates how detection efficiency $\eta_d$ influences the maximal CHSH value ($S$). The state $\ket{\Psi_\text{out}^\text{(1ph)}}$ suffers losses in both superposition components. When photons are lost, the anti-correlation in this state degrades towards a full correlation $\ket{00}$, leading to a rapid drop in $S$. In Ref.~\cite{Acin2024}, an additional squeezing operation through a non-linear crystal $\chi^{(2)}$ is inserted between the displacement operation and the detector to partially mitigate this effect, but it is experimentally challenging. In contrast, state $\ket{\Psi_\text{out}^\text{(2ph)}}$ exhibits much higher loss resilience as losses only affect the $\ket{11}$ component. By tuning $t_s$ and $g$ to reduce the coefficient $\lambda_1(1-t_s)$, the protocol can approach the Eberhard limit for photon-counting measurements~\cite{Morteza}. We compare these results with the polarization-based protocol~\cite{Oudot2024}, where states of the form $\cos(\theta) \ket{HH} + \sin(\theta) \ket{VV}$ experience symmetric losses across both components. Here, however, strategic post-processing can preserve correlations: if Alice and Bob assign no-detection events the same value as $\ket{H}$-detection, the $\ket{HH}$ correlations remain unaffected by loss. Taking $\theta\to 0$ suppresses contributions from $\ket{VV}$, achieving robustness comparable to the 2-photon protocol.

\bigskip
\noindent
\textit{Security analysis in the asymptotic regime.---} The asymptotic secure key rate for DI QKD protocols, which offers resilience against collective attacks by a quantum eavesdropper (Eve)~\cite{Acin2007,Pironio2009}, is lower-bounded by the Devetak--Winter formula~\cite{DW}
\begin{equation}
    r_{\infty} \geq H(A_1|E) - H(A_1|B_3),
    \label{DW-ineq}
\end{equation}
where $H(X|Y)$ denotes the conditional von Neumann entropy. This expression quantifies Alice and Bob's information advantage over Eve: the first term captures Eve's uncertainty about Alice's outcomes, while the second represents error-correction costs. Although $H(A_1|B_3)$ follows directly from measurement statistics, bounding Eve's information $H(A_1|E)$ requires careful analysis. {To evaluate this, we} exploit the connection between CHSH inequality violations and information security. Using the analytical framework from Ref.~\cite{Ho2020} with noisy preprocessing, we obtain
\begin{equation}
    \begin{aligned}[c]
    r \geq{}& 1 - h\left(\tfrac{1+\sqrt{(S/2)^2-1}}{2}\right)- H(A_1|B_3)\\
    &{}+ h\left(\tfrac{1+\sqrt{1-q(1-q)(8-S^2)}}{2}\right),\end{aligned}
    \label{SKR-CHSH+NP}
\end{equation}
where $h(x) = -x\log_2(x) - (1-x)\log_2(1-x)$ is the binary entropy function, and $q$ denotes the bit-flip probability during noisy preprocessing, a standard method to enhance key rate (see SM, Section~\ref{LB_Rate-sec}.A).

For tighter bounds, we utilize the full probability distribution $p(a,b|x,y)$. Following the Brown–Fawzi–Fawzi (BFF) method~\cite{Brown2024}, we bound conditional entropy via noncommutative polynomial optimization, implemented through the Navascu\'es--Pironio--Ac\'{\i}n (NPA) hierarchy~\cite{NPA}. This semidefinite programming approach yields near-optimal bounds with negligible gaps to collective attack upper bounds. We also investigate bounds based on min-entropy, which directly quantifies Eve's guessing probability~\cite{Pan-PRL} 
(see SM, Section~\ref{LB_Rate-sec} for a detailed description of the lower bound methods).

Building on these theoretical bounds, we examine how practical imperfections affect protocol performance. In particular, mode mismatch at Charlie's station critically affects performance by reducing interferometric visibility ($V_c$) and degrading heralded entanglement (see SM, Section~\ref{vis-sec}). In addition, local mismatches between the signal modes and the displacement coherent states in the measurement setups at Alice's ($V_a$) and Bob's ($V_b$) local labs can further reduce the observed Bell violation. As shown in SM Section~\ref{local-vis}, their impact is comparable to that of Charlie’s visibility. For simplicity, in our main analysis we assume a common visibility parameter $V:=V_a=V_b=V_c$.

Fig.~\ref{fig:keyrate-main} shows asymptotic key rates versus local detection efficiency $\eta_d$ for various visibility values.
Our numerical optimization reveals critical thresholds: positive key rates require $\eta_d = 91.5\%$ for the 1-photon protocol and $\eta_d=80.2\%$ for the 2-photon one. Importantly, optimization over the parametric gain $g$ enables the 2-photon protocol to approach the ideal state $\ket{00}+r\ket{11}$ with the lowest detection efficiency threshold. Although a smaller $g$ reduces the heralding probability, this effect is fully compensated by increasing the number of rounds in the asymptotic case. Both protocols show similar robustness to imperfect visibility $v<1$, thanks to the fact that a single photon interferes at Charlie’s station, leading to a $\sqrt{V}$-dependence of the joint probabilities. In contrast, in polarization-based protocols ~\cite{Oudot2024, Masini2022}, the interferometric visibility enters linearly, causing a considerably stronger degradation in key-rate performance.

\bigskip

\begin{figure}
    \centering
    \includegraphics[width=\columnwidth]{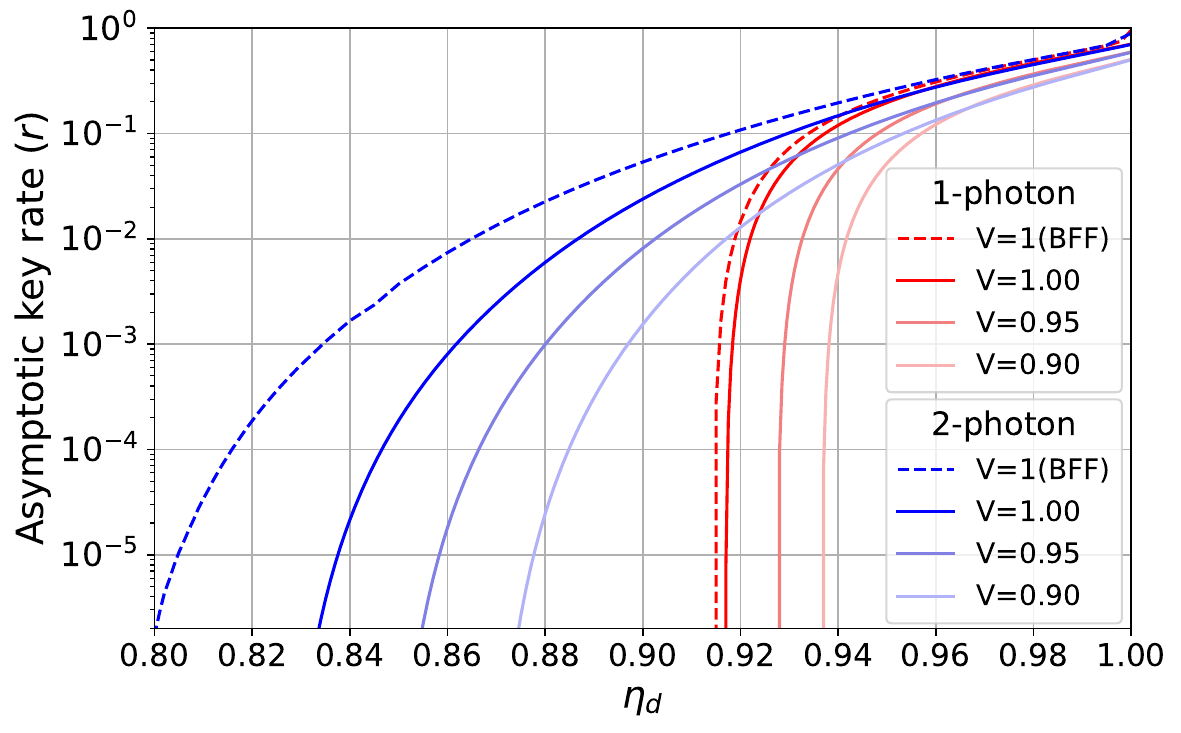}
    \caption{Asymptotic raw key rates $r$ (logarithmic scale) as a function of local detection efficiency $\eta_d$ for visibility $v$ at Charlie's, computed for the 1-photon protocol (red), and the 2-photon protocol (blue), both with optimal preprocessing $q_{\text{opt}}$.
    The solid lines correspond to the key rate computed using the analytical bound in Eq.~\eqref{SKR-CHSH+NP} for $V=1, 0.95, 0.9$, while dashed lines show results from the BFF method (see SM, Section~\ref{LB_Rate-sec}.B) for $V=1$.}
    \label{fig:keyrate-main}
\end{figure}

\noindent
\textit{Finite-key security analysis.---} Practical implementation of our DI-TF QKD protocol requires analysis of finite-size effects to determine maximum achievable distances and realistic key rates while ensuring security against general quantum attacks~\cite{Dupuis2020,Dupuis2019}. We employ EAT to bound conditional entropy $H(A_1\vert E)$ over $N$ rounds~\cite{Dupuis2020,Dupuis2019}.

Here, we implement this analysis through two complementary approaches. First, using Eq.~\eqref{SKR-CHSH+NP}, we construct a linear bound via its tangent. This method offers computational simplicity, requiring only CHSH monitoring. Next, we perform full statistics certification exploiting the complete probability distribution $p(a,b\vert x,y)$, resulting in tighter bounds at increased computational cost. Error correction employs the VHASH hashing algorithm~\cite{Dai2007}, followed by privacy amplification~\cite{NadlingerNature2022}.

The finite-size raw key rate for $N$ rounds is $r=\ell/N$, where $\ell$ denotes the extractable secure key length. We convert it to bits per second (bps) using $R = P_h^\text{}\nu r$, where $\nu = 100$ MHz is the laser source repetition rate and $P_h$ is the heralding probability. SM, Section~\ref{finite-sec} provides details on the numerical approaches used to optimize the finite-key rate.

\begin{figure}
    \centering
    \includegraphics[width=\columnwidth]{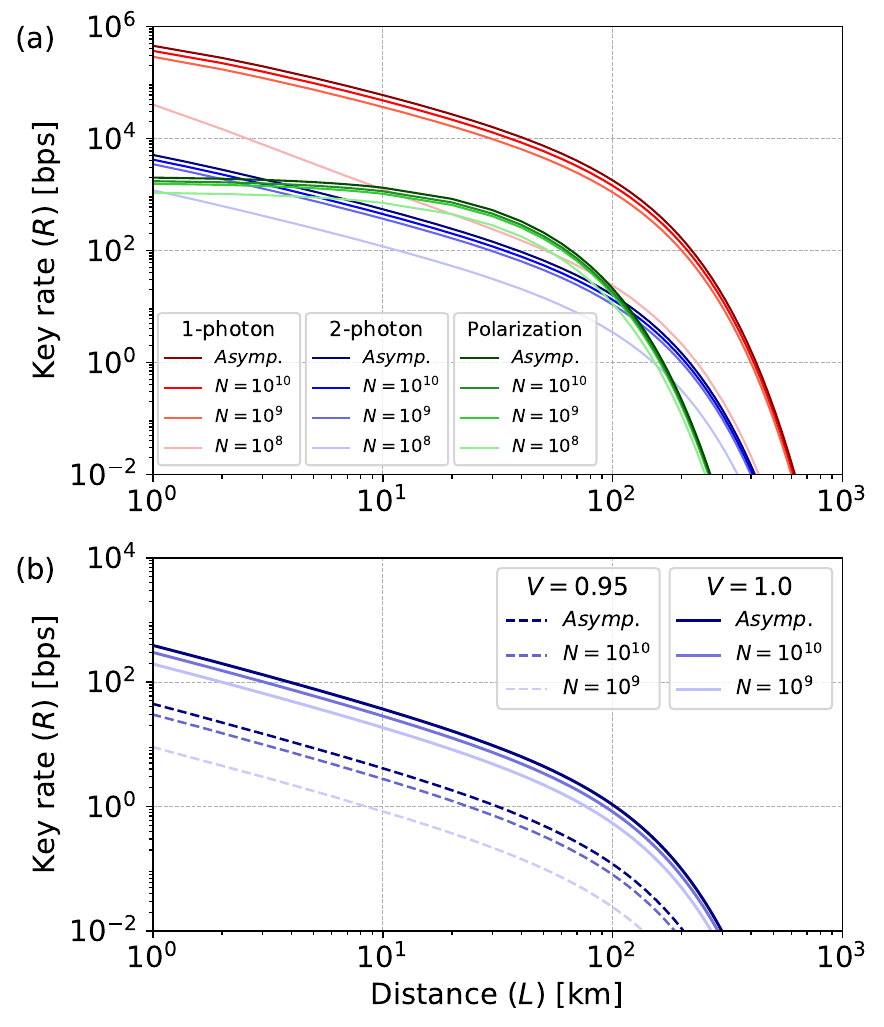}
    \caption{Comparison of finite-size and asymptotic key rates $R$ versus distance $L$ for $N \in \{10^8, 10^9, 10^{10}, \infty\}$ protocol rounds, assuming repetition rate $\nu=100$ MHz, optimal parametric gain $g$, computed for:
    (a) local detection efficiency $\eta_d = 93\%$ for the 1-photon protocol (red), the 2-photon protocol (blue), and the polarization-based protocol from Ref.~\cite{Oudot2024} (green), all for visibility $V=1$, 
    (b) local detection efficiency $\eta_d = 90\%$ for the 2-photon protocol, computed for visibilities $V=0.95$ (dashed lines) and 1 (solid lines).}
    \label{fig:protocol-comparison}
\end{figure}

Fig.~\ref{fig:protocol-comparison}a compares the finite-size key rates as a function of distance across various protocols at an overall efficiency of $\eta_d = \eta_s= 0.93$. Notably, we observe a significant distance breakthrough: the 1-photon protocol maintains key rates above 1~bps beyond 400~km, which is unprecedented for DI QKD protocols. 
Although the asymptotic rates remain positive at all distances, finite-size effects impose practical limitations. With $N \geq 10^{9}$ repetition rounds, the key rates closely approach the asymptotic bound, whereas for $N = 10^8$ we observe their significant decrease.

To explore a more practical scenario, we consider the 2-photon protocol at a reduced detection efficiency of $\eta_d = \eta_s = 0.90$ with visibilities $V = 1$ and $0.95$. As depicted in Fig.~\ref{fig:protocol-comparison}b, setting an optimal $g_a$ and $g_b$  enables this setup to reach approximately 1~bps for $N = 10^{10}$ rounds over a 100~km distance.

\bigskip

\noindent
\textit{Feasibility.---} The protocols studied in this Letter face two main challenges. First, the measurement scheme requires displacement operations combined with single-photon counting. Such measurements have been demonstrated experimentally~\cite{Caspar2020}, with efficiency requirements primarily driven by the need for highly efficient two-mode squeezed-vacuum sources. Second, phase drifts in the heralding scheme can lead to heralding of separable states. Ref.~\cite{Caspar2020} addressed this using phase stabilization in a single-photon interference-based heralding scheme closely related to our protocol.  Solid-state entanglement at metropolitan scales has recently been achieved using single-photon heralding schemes~\cite{Stolk2024,Zhao2024}. Long-distance device-independent quantum key distribution based on high-visibility single-photon interference has been demonstrated very recently~\cite{Lu2026}.

\bigskip

\noindent
\textit{Conclusions.---} We have studied two fully photonic approaches for realizing device-independent quantum key distribution with twin-field scaling. The choice between them involves a trade-off between detector requirements and the key rate achievable at long distances. The 2-photon protocol offers the best efficiency threshold up to 80.2\% (91.5\% for the 1-photon protocol) in the asymptotic regime, achieving state-of-the-art threshold for DI-TF QKD (see SM, Table~\ref{table:det-threshold} for a complete comparison). This is lower than the $81.8\%$ required in polarization-based schemes, although both protocols can reach the Eberhard limit for CHSH violation.
In the realistic finite-size regime, we find that an efficiency of 90\% is required to obtain a high enough key rate. The key rate obtained in the 1-photon protocol is higher by one order of magnitude when the efficiency is higher than 92\%. State-of-the-art experiments can achieve 82.2\% overall efficiency ~\cite{Zhao2024} which makes a proof-of-principle experiment for the 2-photon protocol available with current setups. We believe that our work represents an important step toward the realization of long-distance DI QKD.
\bigskip

\noindent
\textit{Note added:} During the preparation of this manuscript, we became aware of two related works, which study similar protocols, but only in the asymptotic regime. In Ref.~\cite{Alwehaibi2025}, a 2-photon protocol was presented. In contrast, Ref.~\cite{Ishihara2025} analyzes both 1- and 2-photon protocols and includes a comprehensive dark count analysis. Compared to them, our work provides both rigorous finite-size security and visibility analyses.
\bigskip

\noindent
\textit{Acknowledgments.---} M. M. and M. S. were supported by the European Union's Horizon 2020 research and innovation programme under the Marie Skłodowska-Curie project `AppQInfo' No.\ 956071. M. A. and M. S. were supported by the National Science Centre `Sonata Bis' Project No.\ 2019/34/E/ST2/00273, and M. S. by the QuantERA II Programme which has received funding from the European Union's Horizon 2020 research and innovation programme under Grant Agreement No.\ 101017733, Project `PhoMemtor' No.\ 2021/03/Y/ST2/00177. P.M. was supported by the European Union’s Horizon Europe research and innovation programme under grant agreement No.\ 101080086/NeQST. E. O. acknowledges funding by the French national quantum initiative managed by Agence Nationale de la Recherche in the framework of France 2030 with the reference ANR-22-PETQ-0009. NPA optimization was implemented using Python library ncpol2sdpa~\cite{Wittek_2015}, and MOSEK~\cite{mosek} was used as a solver. The Entropy Accumulation analysis was performed using expdiqrng package~\cite{expdiqrng}.

\onecolumngrid
\clearpage

\renewcommand{\thesection}{S\arabic{section}}
\setcounter{section}{0}
\renewcommand{\theequation}{S\arabic{equation}}
\setcounter{equation}{0}
\renewcommand{\thefigure}{S\arabic{figure}}
\setcounter{figure}{0}

\begin{center}
\textbf{\large Supplemental Material:\\\strut Long-range photonic device-independent quantum key distribution\\ using SPDC sources and linear optics}
\end{center}

\begin{quote}
This Supplementary Material provides detailed theoretical and numerical analyses supporting the DI-TF QKD protocol. It presents the mathematical framework for
1-photon and 2-photon entanglement distribution, including derivations of marginal and joint probability distributions under various loss conditions. The document compares three different approaches for calculating secure key rates: the CHSH inequality method, guessing probability with min-entropy, and the Brown--Fawzi--Fawzi (BFF) conditional entropy method. Each method is evaluated with and without post-processing techniques (noisy preprocessing and post-selection). 
The material includes extensive numerical results demonstrating that detection efficiency thresholds range from 80.2\% to 95.9\%, depending on the protocol and numerical approaches. 
The method with the BFF method combined with noisy preprocessing achieves the best performance, with a threshold efficiency of 91.5\% for the 1-photon protocol and 80.2\% for the 2-photon protocol, with a super-low key rate value around the threshold. Additionally, the document examines the impact of visibility between SPDC outputs on protocol performance and provides bounds for certified quantum randomness generation.
\end{quote}

\section{Entanglement distribution}
\label{Entang-sec}
In this section, we derive the density matrices and heralding probabilities for both the one-photon and two-photon protocols, taking into account transmission losses and higher-order photon contributions.

\subsection{The 1-photon protocol}
\label{1phEntang-sec}

In the entanglement distribution protocol shown in Fig.~\ref{fig:setup} (a), Alice and Bob pump their spontaneous-parametric down-conversion (SPDC) crystals with pulsed lasers and generate locally two-mode squeezed vacuum (TMSV) states of the form
\begin{equation}\label{TMSV State}
    \ket{\Psi} = \sum_{n = 0}^{\infty} \sqrt{\lambda_n} \ket{n,n}_{1,2},
\end{equation}
where $\sqrt{\lambda_n}=\frac{\tanh^n g}{\cosh g}$ is the probability amplitude of $n$-photon components, $g$ is the parametric gain of SPDC crystals that is governed by the laser power and phase matching conditions, and indices 1 and 2 denote the signal and idler modes, respectively. While $\{\lambda_n\}$ is a geometrically decreasing sequence with the most probable zero and one-photon events, in this work we consider the full multi-photon form of $\ket{\Psi}$ to avoid the free-sampling assumption in the subsequent Bell test. 

Next, idler modes $a_2$ and $b_2$ are sent to Charlie, interfered on a symmetric beam splitter, measured using photon-number-resolved (PNR) detection, and heralded. In the lossless conditions, as a result of this, a generalized Holland--Burnett state~\cite{Thekkadath2020}, which is near-maximally entangled in the photon number, is shared between Alice's and Bob's signal modes, $a_1$ and $b_1$
\begin{equation}
    \ket{\Psi_\text{out}^{(\sigma,k)}} = 
    \sum_{n=0}^\sigma \mathcal{A}_\sigma(k,n)\,\ket{n,\sigma-n}_{a_1,b_1},
    \label{eq:SIpsiout}
\end{equation}
where $k$ and $\sigma-k$ are the heralded Charlie's readouts, $\mathcal{A}_\sigma(k,n)=i^{k-n}\phi_k(n-\frac{\sigma}{2},\sigma)$ is the probability amplitude, $k=0, \dots, \sigma$, and $\phi_k$ are symmetric Kravchuk functions -- orthonormal discrete polynomials which converge to Hermite--Gauss polynomials for large $\sigma$. For more information on these functions, see the appendices in~Refs.~\cite{PRA} and \cite{QKT}.
If $\sigma=1$, Charlie heralds a single-photon entanglement in one of the following forms:
\begin{align}
     \ket{\Psi_\text{out}^{(1,0)}} =  \tfrac{1}{\sqrt{2}}(\ket{0,1} - i \ket{1,0})_{a_1, b_1},\\
     \ket{\Psi_\text{out}^{(1,1)}} = \tfrac{1}{\sqrt{2}}(\ket{0,1} + i \ket{1,0})_{a_1, b_1}.
\end{align}
Without loss of generality, in this work we focus on $\ket{\Psi_\text{out}^{(1,1)}} \equiv \ket{\Psi_\text{out}^\text{(1ph)}}$.

In a non-ideal scenario, where the idler channels are lossy, the losses can be modeled using a beam splitter with reflectivity $r_t$ and transmissivity $\eta_t$, satisfying $r_t + \eta_t=1$. Denoting $S$ as the total number of photons produced by Alice's and Bob's SPDCs in the idler modes, and $\sigma$ as the total number of photons arriving at Charlie's station after the photon loss, the state from Eq.~\eqref{eq:SIpsiout} converts into a mixed state
\begin{align}\label{rho_out}
    \rho_\text{out}^{(k,\sigma)} = \sum_{S=\sigma}^{\infty}P_{S|\sigma} \rho_S^{(k,\sigma)},
\end{align}
in which $P_{S|\sigma} = \left( r_t \tanh^2 g \right)^{S-\sigma} \left(1 - r_t \tanh^2 g \right)^{\sigma + 2} \binom{S + 1}{\sigma + 1}$, and 
\begin{align} \label{rho_S}
    \rho_S^{(k,\sigma)}
    =\tilde{\mathcal{N}}^{-2}_S\hspace{-2mm}
    \sum_{n,n'=0}^{S} \ket{n,S-n}\bra{n',S-n'} 
    \hspace{-6mm}\sum_{p=\max(0, n-\sigma, n'-\sigma)}^{\min(S-\sigma,n,n')} 
    \scalebox{1.25}{$\sqrt{{ n\choose p}{ n'\choose p} {S-n\choose S-\sigma-p}{S-n'\choose S-\sigma-p}}$} \times \mathcal{A}_\sigma(k,n-p) \mathcal{A}^*_\sigma(k,n'-p),
\end{align}  
where
$\tilde{\mathcal{N}}^{2}_S=\sum_{n=0}^{S}\sum_{p=\max(0, n-1)}^{\min(S-1, n)} {n \choose p} {S-n \choose S-p-1} \lvert\mathcal{A}_\sigma(k, n-p) \rvert^2 = \binom{S+1}{\sigma+1}$. 
Here, the corresponding state for $S=\sigma$ in Eq.~\eqref{rho_out} is given by $\rho_\sigma^{(k,\sigma)} = \ket{\Psi_\text{out}^{(\sigma,k)}}\bra{\Psi_\text{out}^{(\sigma,k)}}$, which remains the dominant term provided that $g \ll 1$.
By changing the summation order and redefining the variables in Eqs.~\eqref{rho_out} and \eqref{rho_S}, the state can be expressed in the following equivalent form:
\begin{align} \label{rho_out_separate}
    \rho_{\text{out}}^{(k,\sigma)}
    = \left(1 - r_t \tanh^2 g \right)^{\sigma + 2} 
    \sum_{p,q=0}^{\sigma} \mathcal{A}_\sigma(k,p) \mathcal{A}^*_\sigma(k,q)
    \sum_{m=0}^{\infty} \big(r_t \tanh^2 g \big)^{m}  
    \sqrt{\binom{m+p}{m}\binom{m+q}{m}} \ket{m+p}\bra{m+q}_{a_1} \nonumber\\
    \otimes \sum_{n=0}^{\infty} \big(r_t \tanh^2 g \big)^{n}  
    \sqrt{\binom{n+\sigma-p}{n}\binom{n+\sigma-q}{n}} \ket{n+\sigma-p}\bra{n+\sigma-q}_{b_1}.
\end{align}
To rewrite the state for the 1-photon protocol by setting $\sigma=k=1$, we have
\begin{align} \label{rho_out_1ph}
    & \rho_{\text{out}}^{(\text{1ph})}
    = \frac{\left(1 - r_t \tanh^2 g \right)^{3}}{2} \times \\
    & \Biggl[\ \sum_{m=0}^{\infty} \big(r_t \tanh^2 g \big)^{m} \ketbra{m}{m}
    \otimes \sum_{n=0}^{\infty} n \big(r_t \tanh^2 g \big)^{n-1} \ketbra{n}{n}
    + \sum_{m=0}^{\infty} m \big(r_t \tanh^2 g \big)^{m-1} \ketbra{m}{m}
    \otimes \sum_{n=0}^{\infty} \big(r_t \tanh^2 g \big)^{n} \ketbra{n}{n}\nonumber\\
    & + \sum_{m=0}^{\infty} \big(r_t \tanh^2 g \big)^{m} \sqrt{m+1} \
    \sum_{n=0}^{\infty} \big(r_t \tanh^2 g \big)^{n} \sqrt{n+1} \
    \biggl(\ket{m}\bra{m+1} \otimes \ket{n+1}\bra{n}
    +\ket{m+1}\bra{m} \otimes \ket{n}\bra{n+1}\biggr) \Biggr]_{a_1,b_1}.
\end{align}
The heralding probability of a 1-photon protocol can be obtained by substituting $\sigma=k=1$ in
\begin{equation}\label{eq:1ph PNR heralding proba}
    P_h(k,\sigma) = \frac{\lambda_\sigma}{\cosh^2 g} . \frac{(1-r_t)^\sigma}{(1-r_t \tanh^2 g)^{\sigma+2}}
    = \frac{(\sqrt{\eta_t} \sinh^2{g})^\sigma}{(1+\sqrt{\eta_t} \sinh^2{g})^{\sigma+2}}.
\end{equation}
which is obtained by assuming a perfect detection at Charlie’s station~\cite{PRA}.
Using $r_t = 1-10^{-\alpha_\text{att}(\frac{L}{2})/10} = 1-\sqrt{\eta_t}$, where $\alpha_\text{att}$ is the attenuation coefficient of a medium through which the photons travel.  For standard telecom fibers, $\alpha_\text{att} = 0.2 \; \mathrm{km}^{-1}$, and $L$ is the total distance between Alice and Bob in kilometers. 
Assuming that Charlie is located approximately at the midpoint between Alice and Bob, we replaced $L_{AC}$ with $L/2$. 
Thus, the total probability of the generation of the 1-photon entanglement per pulse is

\begin{equation}
    P_h^{(\text{1ph})} =
    \frac{10^{L/50} \sinh ^2 g}{\left[10^{L/100}+\sinh^2 g\right]^3} \eta_c \, 
    {\approx} \, 10^{-L/100} \sinh^2{g} \ \eta_c\,
    \propto \, O(\sqrt{\eta_t}),
\end{equation}
where $\eta_c$ accounts for the detection efficiency at Charlie's. 
The approximation holds for long distances, as $\sinh^2 g \ll 10^{L/100}$ under the fact that $g \ll 1$, and the protocol scaling is consistent with the twin-field (TF) QKD performance.

Note that the probability ratio of multi-photon to single-photon arrivals at Charlie can be expressed as
\begin{equation}
    \scalebox{1.1}{$\frac{1-P_h(0,0)-P_h(0,1)-P_h(1,1)}{P_h(0,1)+P_h(1,1)}
    =\frac{(1+\sqrt{\eta_t}\sinh^2{g})^{3}-(1+\sqrt{\eta_t}\sinh^2{g})-(2\sqrt{\eta_t}\sinh^2{g})}{2\sqrt{\eta_t}\sinh^2{g}} =\frac{1}{2} $}
    \scalebox{1}{$\sqrt{\eta_t}\sinh^2{g} \ (3+\sqrt{\eta_t}\sinh^2{g})$}.
\end{equation}
Since this ratio scales as $o(10^{-L/100} g^2 )$, when $g \ll 1$ or the distance is large ($L \gg 1$), replacing the PNR detectors with the on/off detectors leaves $\rho^{(\text{1ph})}_\text{out}$ unchanged.

\subsection{The 2-photon protocol}
\label{2phEntang-sec}

The 2-photon protocol, described in this work extends the 1-photon scheme by employing an asymmetric source configuration where Alice prepares a TMSV state
{$\big(\ket{\psi_a} = \sum_{m = 0}^{\infty} \sqrt{\lambda_m^{(a)}}\ \ket{m,m}_{a_1,a_2}\big)$}, 
while Bob ideally generates a heralded single-photon path-entangled (SPPE) state, 
$\big(\ket{\psi_b} = \sqrt{t_s}\ket{0,1}_{b_1,b_2}+ e^{i\phi_s} \,\sqrt{1-t_s}\,\ket{1,0}_{b_1,b_2}\big)$. 
Upon successful heralding at Charlie's station through single-photon detection, the protocol produces a tunable photon-number path-entangled state between Alice and Bob
\begin{equation}
\ket{\Psi_\text{out}^{(\text{2ph})}} \propto 
\sqrt{\lambda_0^{(a)} t_s t_c} |0, 0\rangle_{a_1,b_1} +  e^{i\phi} \, \sqrt{\lambda_1^{(a)} r_s r_c} |1, 1\rangle_{a_1,b_1},
\end{equation}
where $t_{s(c)}$ and $r_{s(c)}$ are Bob's (Charlie's) beam splitter transmissivity and reflectivity, respectively (with  $t_{s(c)}+r_{s(c)}=1$), and $\phi=\phi_s+\phi_c$ are the combined phases arising from Bob's and Charlie's beam splitters. This state can be continuously tuned from a product state to a maximally entangled state by adjusting the source parameters.

In the non-ideal case, where Bob employs an imperfect `photons'/`no-photons' (on/off) detector of efficiency $\eta_s$ on beam $b_0$ to herald a single-photon state, the resulting density matrix in mode $b_1$ is
\begin{equation}
    \rho^{(b)}_{0} 
    = \mathcal{N}_b^{-2}\mathrm{Tr}_{b_0}\Big[ 
    \big(E_1^{\eta_s} \otimes \mathbb{I}\big)_{b_0,b_2} \ \ketbra{\Psi}{\Psi} \Big]
\end{equation}
where $\ket{\Psi}$ is defined in Eq.~\eqref{TMSV State}, and $E_1^{\eta_s}=\mathbb{I}-\sum_{k=0}^\infty (1-\eta_s)^k\ketbra{k}{k}$ represents the effective POVM element for click detection. Thus, the heralded state in mode $b_2$ simplifies to
\begin{equation}
    \rho^{(b)}_{0} 
    = \mathcal{N}_b^{-2} \sum\nolimits_{n=0}^\infty \lambda^{(b)}_n \big (1-(1-\eta_s)^n \big)\ketbra{n}{n}
\end{equation}
with normalization factor $\mathcal{N}_b^2= 1-(1+\eta_s \sinh^2{g_b})^{-1} = P_s$, representing the probability that Bob’s detector clicks and the imperfect SPPE state is generated.
Then, Bob interferes this state with a vacuum in mode $b_1$ at a beam splitter, resulting in a path-entangled state across modes $b_1$ and $b_2$ given by
\begin{align}
    \rho^{(b)}_\text{in} 
    & = U^{(b_1,b_2)}_\text{BS} \ (\ketbra{0}{0} \otimes \rho^{(b)}_{0})_{b_1,b_2} {U^{(b_1,b_2)}_\text{BS}}^\dagger  \nonumber\\
    & = \mathcal{N}_b^{-2} \sum_{n=0}^\infty \lambda^{(b)}_n \big (1-(1-\eta_s)^n \big)
    \sum_{l,l'=0}^{n} \sqrt{\binom{n}{l}\binom{n}{l'} t_s^{l+l'} r_s^{2n-l-l'}} e^{i\phi_s(l'-l)}\ketbra{n-l,l}{n-l',l'}_{b_1,b_2} 
\end{align}

Moreover, idler channels experience symmetric loss, i.e.\ $r_t^{(a_2)}=r_t^{(b_2)}$, which can be modeled using a beam splitter $U_\text{BS}^{a_2,a_3 (b_2,b_3)}$, through which each idler mode $a_2(b_2)$ interferes with a vacuum mode $\ket{0}_{a_3(b_3)}$.  
Then, after the beams interfere at Charlie’s station and a single-photon detection occurs, the resulting shared mixed state can be calculated as
\begin{align}
    \rho_{\text{out}}^{(\text{2ph})} = {\mathcal{N}}^{-2} 
    \big(\sqrt{t_c}\bra{01}+e^{-i\phi_c}\sqrt{r_c}\bra{10} \big)_{a_2,b_2}
    &\Bigg[\mathrm{Tr}_{a_3}\Big( U_\text{BS}^{(a_2,a_3)}
    (\ketbra{\psi_a}{\psi_a}_{a_1,a_2} \otimes\ketbra{0}{0}_{a_3}) 
    {U_\text{BS}^{(a_2,a_3)}}^\dagger\Big)\\
    & \otimes \mathrm{Tr}_{b_3}\Big( U_\text{BS}^{(b_2,b_3)}
    (\rho_{in}^{(b)} \otimes\ketbra{0}{0}_{b_3}) 
    {U_\text{BS}^{(b_2,b_3)}}^\dagger \Big) \Bigg]
    \big(\sqrt{t_c}\ket{01}+e^{i\phi_c}\sqrt{r_c}\ket{10} \big)_{a_2,b_2}. \nonumber
\end{align}
Using 
$\mathrm{Tr}_{x_3} \big(U_\text{BS}\ketbra{m,0}{m',0}_{x_2,x_3} U_\text{BS}^\dagger \big) 
= \sum_{p=0}^{\min(m,m')} \sqrt{\binom{m}{p} \binom{m'}{p}\ (1-r_t)^{m+m'-2p} \ r_t^{2p}} \,\ketbra{m-p}{m'-p}_{x_2}$, we have:
\begin{align}
    \rho_{\text{out}}^{(\text{2ph})} & = {(\mathcal{N N}_b})^{-2} 
    \big(\sqrt{t_c}\bra{01} \ +\ e^{-i\phi_c}\sqrt{r_c}\bra{10} \big)_{a_2,b_2}
    \Bigg[\sum_{m,m'=0}^{\infty} \sqrt{\lambda^{(a)}_m \lambda^{(a)}_{m'}} 
    \sum_{p=0}^{\min(m,m')} \sqrt{\binom{m}{p} \binom{m'}{p}(1-r_t)^{m+m'-2p} \ r_t^{2p}}
    \nonumber\\
    & \times \ketbra{m,m-p}{m',m'-p}_{a_1,a_2} 
    \otimes \sum_{n=0}^\infty \lambda^{(b)}_n \big (1-(1-\eta_s)^n \big)
    \sum_{l,l'=0}^{n} \sqrt{\binom{n}{l}\binom{n}{l'} t_s^{l+l'} r_s^{2n-l-l'}} \ e^{i\phi_s(l'-l)} \\
    & \times \sum_{q=0}^{\min(l,l')} \sqrt{\binom{l}{q} \binom{l'}{q}\ (1-r_t)^{l+l'-2q} \ r_t^{2q}} \ \ketbra{n-l,l-q}{n-l',l'-q}_{b_1,b_2} \Bigg]
    \big(\sqrt{t_c}\ket{01}+e^{i\phi_c}\sqrt{r_c}\ket{10} \big)_{a_2,b_2}. \nonumber
\end{align}

\noindent By performing a straightforward calculation, the state simplifies to
\begin{align}\label{rho_out_2ph}
    & \hspace{-2mm}\rho_{\text{out}}^{(\text{2ph})} = {(\mathcal{N N}_b)}^{-2} (1-r_t)
    \sum\nolimits_{n=0}^\infty \lambda^{(b)}_n \big (1-(1-\eta_s)^n \big) \\
    & \hspace{-1mm} \scalebox{0.91}{$\times \Bigg[ t_c\sum\limits_{m=0}^\infty \lambda^{(a)}_m r_t^m \ketbra{m}{m}_{a_1}
    \otimes \sum\limits_{l=0}^{n} \binom{n}{l} t_s^{l} r_s^{n-l} \times l r_t^{l-1} \ketbra{n-l}{n-l}_{b_1}
    + r_c \sum\limits_{m=0}^\infty \lambda^{(a)}_m \, m r_t^{m-1} \ketbra{m}{m}_{a_1}
    \otimes \sum\limits_{l=0}^{n} \binom{n}{l} t_s^{l} r_s^{n-l} \times r_t^{l} \, \ketbra{n-l}{n-l}_{b_1}$} \nonumber\\
    & \hspace{-1mm} \scalebox{0.93}{$ +\ \sqrt{t_c r_c} e^{i(\phi_c-\phi_s)}
    \sum\limits_{m=0}^\infty \sqrt{\lambda^{(a)}_m \lambda^{(a)}_{m+1} (m+1)} \ r_t^m \ketbra{m}{m+1}_{a_1}
    \otimes \sum\limits_{l'=0}^{n-1} \sqrt{\binom{n}{l'+1}\binom{n}{l'} t_s^{2l'+1} \ r_s^{2n-2l'-1} \times (l'+1) \ r_t^{2l'}} \ketbra{n-l'-1}{n-l'}_{b_1} $}\nonumber\\
    & \hspace{-1mm} \scalebox{0.93}{$ +\ \sqrt{r_c t_c} e^{i(\phi_s-\phi_c)}
    \sum\limits_{m'=0}^\infty \sqrt{\lambda^{(a)}_{m'+1} \lambda^{(a)}_{m'} (m'+1)} \ r_t^{m'} \ketbra{m'+1}{m'}_{a_1}
    \otimes \sum\limits_{l=0}^{n-1} \sqrt{\binom{n}{l}\binom{n}{l+1} t_s^{2l+1} \ r_s^{2n-2l-1} \times (l+1) \ r_t^{2l}} \ketbra{n-l}{n-l-1}_{b_1} \Bigg].$}
    \nonumber
\end{align}
The normalization factor ${(\mathcal{N N}_b)}^2$ can be calculated by taking trace over $\rho_{\text{out}}^{(\text{2ph})}$:
\begin{align}
    \hspace{-2mm} {(\mathcal{N N}_b)}^2 & = 
    (1-r_t) \sum_{n=0}^\infty \lambda^{(b)}_n \big (1-(1-\eta_s)^n \big)
    \Bigg[ t_c\sum\limits_{m=0}^\infty \lambda^{(a)}_m r_t^m
    \sum\limits_{l=0}^{n} \binom{n}{l} t_s^{l} r_s^{n-l} .\,l r_t^{l-1}
    + r_c \sum\limits_{m=0}^\infty \lambda^{(a)}_m \, m r_t^{m-1}
    \sum\limits_{l=0}^{n} \binom{n}{l} t_s^{l} r_s^{n-l} .\, r_t^{l}\Bigg] \nonumber\\
    & = (1-r_t) \sum_{n=0}^\infty \lambda^{(b)}_n \big (1-(1-\eta_s)^n \big)
    \Bigg[ \frac{t_c \cosh^{-2}{g_a}}{ 1-r_t \tanh^2{g_a}} 
    .\frac{\partial (r_s + t_s r_t)^n}{\partial r_t}
    + \frac{r_c}{\cosh^2{g_a}} \frac{\partial (1-r_t \tanh^2{g_a})^{-1}}{\partial r_t}
    . (r_s + t_s r_t)^n \Bigg] \nonumber\\
    & = (1-r_t)
    \Bigg[ \frac{t_c \cosh^{-2}{g_b}}{1+(1-r_t) \sinh^2{g_a}} 
    .\frac{\partial}{\partial r_t} 
    + \frac{r_c \sinh^2{g_a} \cosh^{-2}{g_b}}{(1+(1-r_t) \sinh^2{g_a})^2} \Bigg]
    \Big(\sum_{n=0}^\infty \big (1-(1-\eta_s)^n \big)\big((r_s + t_s r_t) \tanh^2{g_b} \big)^n\Big) \nonumber \\
    & \! = (1-r_t) \scalebox{1.2}{$
    \! \Bigg[ \frac{t_c}{1+\sqrt{\eta_t} \sinh^2{g_a}} 
    .\frac{\partial}{\partial r_t} 
    + \frac{r_c \sinh^2{g_a}}{(1+\sqrt{\eta_t} \sinh^2{g_a})^2} \Bigg]
    \! \Big(\frac{\cosh^{-2}{g_b}}{1-(r_s + t_s r_t) \tanh^2{g_b}} - \frac{\cosh^{-2}{g_b}}{1-(1-\eta_s)(r_s + t_s r_t) \tanh^2{g_b}}\Big)$}
\end{align}
where we substituted $\lambda_n^{a(b)} = \tanh^{2n}{g_{a(b)}}/\cosh^2{g_{a(b)}}$ and used $\sum_{n=0}^\infty nx^{n-1} = \frac{\partial}{\partial x} (\sum_{n=0}^\infty x^n) = (1-x)^{-2}$ for $x<1$. 
Using $1-r_t=10^{-\alpha_{att}(\frac{L}{2})/10} = \sqrt{\eta_t}$ and replacing $\mathcal{N}_b^2= \eta_s \sinh^2{g_b}/(1+\eta_s \sinh^2{g_b})$, we then obtain
\begin{align}
    \mathcal{N}^2 & = \frac{\sqrt{\eta_t} (1+\eta_s\sinh^2{g_b})}
    {(1+\sqrt{\eta_t}\sinh^2{g_a}) (1+\sqrt{\eta_t} \,t_s\sinh^2{g_b}) \big(1+(\eta_s + (1-\eta_s) \sqrt{\eta_t} \,t_s)\sinh^2{g_b} \big)} \nonumber\\
    & \times \Bigg[\frac{t_c \Big[(1+\sinh^2{g_b})^2-(1-\eta_s)(1-\sqrt{\eta_t} \, t_s)^2 \sinh^4{g_b}\Big] t_s }
    {(1+\sqrt{\eta_t}\,t_s\sinh^2{g_b}) \big(1+(\eta_s + (1-\eta_s) \sqrt{\eta_t} \,t_s)\sinh^2{g_b} \big)} 
    + \frac{r_c (1-\sqrt{\eta_t}\,t_s) \sinh^2{g_a}}{1+\sqrt{\eta_t}\sinh^2{g_a}}\Bigg].
\end{align}
which, for long distances where $\eta_t \ll 1$, is proportional to $o(\sqrt{\eta_t})$. 
In the limit $g_b\ll 1$, where Bob's initial state approaches the SPPE state, the shared state and its normalization factor reduce to 
\begin{align}\label{rho_out_2ph_SPPE}
    \tilde{\rho}_{\text{out}}^{(\text{2ph})} = & \, \tilde{\mathcal{N}}^{-2} \sqrt{\eta_t} 
    \Big[\sum_{n=0}^\infty \lambda^{(a)}_n r_t^n (t_st_c + n t_s r_c) \ketbra{n,0}{n,0}
    + \frac{r_s r_c}{r_t} \sum_{n=0}^\infty n\lambda^{(a)}_n r_t^n\ketbra{n,1}{n,1} \nonumber\\
    &\qquad{}+ \sqrt{t_s r_s t_c r_c} \, \tanh{g_a} \sum_{n=0}^\infty \sqrt{n+1} 
    \Big(e^{i(\phi_c-\phi_s)} \ketbra{n,0}{n+1,1} 
    + e^{i(\phi_s-\phi_c)} \ketbra{n+1,1}{n,0}\Big) \Big]_{a_1,b_1},
\end{align}
\begin{equation}
    \tilde{\mathcal{N}}^2 = 
    \sqrt{\eta_t} \sum_{n=0}^\infty \lambda_n r_t^n \left(t_st_c + n t_s r_c+ \frac{n r_s r_c}{r_t}\right)
    = \sqrt{\eta_t} \Big(\frac{t_c t_s}{1+\sqrt{\eta_t}\sinh^2{g_a}} 
    + \frac {r_c (1- t_s \sqrt{\eta_t}) \sinh^2{g_a}}{(1+\sqrt{\eta_t} \sinh^2{g_a})^2} \Big),
\end{equation}
The probability of detecting a single photon at Charlie's station is given by the product of Bob's single-photon generation probability $P_s$, the photon arrival probability at Charlie's ${\tilde{\mathcal{N}}}^2$, and his detector efficiency $\eta_c$. 
Assuming a symmetric beam splitter at Charlie's (i.e., $r_c=t_c=\frac{1}{2}$), the heralding efficiency is given by:
\begin{equation}
    P_h^{(\text{2ph})} =
    \tilde{\mathcal{N}}^2 P_s \eta_c=
    \frac{\sqrt{\eta_t} (t_s + \sinh ^2 g_a )}{2\big(1+\sqrt{\eta_t}\sinh^2{g_a}\big)^2} P_s \eta_c \,  
    \propto \, O(\sqrt{\eta_t}).
   \label{eq:2ph heralding proba} 
\end{equation}
For the probability ratio of multi-photon to single-photon arrivals at Charlie, noting that
\begin{equation}
    P_h(0,0)=\frac{(1+\eta_s\sinh^2{g_b}) (1-\sqrt{\eta_t}\, t_s)}
    {(1+\sqrt{\eta_t}\sinh^2{g_a}) (1+\sqrt{\eta_t} \,t_s\sinh^2{g_b}) \big(1+(\eta_s + (1-\eta_s) \sqrt{\eta_t} \,t_s)\sinh^2{g_b} \big)}
    \overset{g_b\ll1}{\approx} \frac{1-\sqrt{\eta_t}\, t_s}{1+\sqrt{\eta_t}\sinh^2{g_a}},
\end{equation}
in the limit $g_b\ll1$, we can write
\begin{equation}
    \scalebox{1.15}{$\frac{1-P_h(0,0)-P_h(0,1)-P_h(1,1)}{P_h(0,1)+P_h(1,1)}
    =\frac{(1+\sqrt{\eta_t}\sinh^2{g_a})^{2}-(1+\sqrt{\eta_t}\sinh^2{g_a})(1-\sqrt{\eta_t}\, t_s) -\sqrt{\eta_t}(t_s+\sinh^2{g_a})}{\sqrt{\eta_t}(t_s+\sinh^2{g_a})}$}
    = \sqrt{\eta_t}\sinh^2{g_a}.
\end{equation}
Since this ratio scales as $o(10^{-L/100} g_a^2 )$, for $g_{a(b)} \ll 1$ or $L \gg 1$, substituting the PNR detectors with the `photons'/`no-photons' (on/off) detectors does not affect $\rho^{(\text{2ph})}_\text{out}$.

\section{Measurements}
\label{POVMs-sec}

 Alice and Bob implement displacement operations $D(\delta) =  e^{\delta a^{\dagger} - \delta^{*} a}$, followed by `photons'/`no-photons' (on/off) detection with efficiency $\eta_d$. 
The parties perform their local measurements without quantum memories in the delayed-choice scheme. To apply different measurement settings, each party interferes their signal mode with a coherent pulse $\ket{\alpha}$ or $\ket{\beta}$ on variable beam splitters characterized by reflectivity-to-transmissivity ratios $r_a:t_a$ and $r_b:t_b$, respectively.
This procedure effectively implements the displacements 
$\delta_a = -i\alpha\sqrt{r_a}$ and $\delta_b = -i\beta\sqrt{r_b}$, yielding the final state
$D(\delta_\alpha)\otimes D(\delta_\beta)\ket{\Psi_\text{out}}$ which is then measured using local detectors.
To account for the imperfection of detectors, we model the effective no-click POVM as a beam splitter followed by a perfect detector, whose inefficiency is represented by losses in the beam splitter. Thus, we can formulate it as
\begin{equation}
    E_{0}^{\eta_{d}} = \sum_{m=0}^{\infty} (1-\eta_d)^{m} \ketbra{m}{m} = (1-\eta_d)^{a^\dagger a},
\end{equation}
where term $\ket{m}\bra{m}$ corresponds to receiving $m$ photons, and $(1-\eta_d)^m$ corresponds to losing all of them. Thus, the effective local measurement for the output 0 can be written as
$\mathcal{M}_0 = D(\delta) E_{0}^{\eta_d} D^{\dagger}(\delta)$.
 To proceed, we need to rewrite and simplify $\mathcal{M}_0$ in the photon-number (Fock) basis:
\begin{equation}\label{Effective Measurements}
    \bra{n} \mathcal{M}_0 \ket{n'}
    = \bra{n}  D(\delta) E_{0}^{\eta} D^{\dagger}(\delta) \ket{n'} 
    = \sum_{m=0}^{\infty} (1-\eta_d)^m \bra{n}  D(\delta) \ketbra{m}{m} D^{\dagger}(\delta) \ket{n'}.
\end{equation}
Continuing the calculation requires representing the displacement operator $D(\delta)$ in the photon number basis as well:
\begin{align} \label{Displacement}
    \bra{n} D(\delta) \ket{m} & 
    = \bra{n} D(\delta)\frac{(a^{\dagger})^{m}}{\sqrt{m!}} \ket{0} 
    = \bra{n} \frac{(a^{\dagger}-\delta^{*})^{m}}{\sqrt{m!}} D(\delta)\ket{0} 
    = \sum_{k=0}^{\min(m,n)} { m \choose k} \frac{(-\delta^{*})^{m-k}}{\sqrt{m!}} \bra{n} (a^{\dagger})^{k} \ket{\delta} \nonumber \\
    & = \hspace{-2mm}
    \sum_{k=0}^{\min(m,n)} { m \choose k} \frac{(-\delta^{*})^{m-k}}{\sqrt{m!}} \sqrt{\frac{n!}{(n-k)!}} e^{-\frac{|\delta|^2}{2}} \frac{\delta^{n-k}}{\sqrt{(n-k)!}} 
    = e^{-\frac{|\delta|^2}{2}} \sum_{k=0}^{\min(m,n)}
     \frac{\sqrt{m!n!} \hspace{2mm} \delta^{n-k} (-\delta^{*})^{m-k}}{k!(m-k)!(n-k)!}. 
\end{align}
Substituting Eq.~\eqref{Displacement} to Eq.~\eqref{Effective Measurements} we obtain
\begin{align} \label{Effective Measurements summations}
    \bra{n} \mathcal{M}_0 \ket{n'} &
    = e^{-|\delta|^2}\sum_{m=0}^{\infty} (1-\eta_d)^m
    \sum_{k=0}^{\min(m,n)} \frac{\sqrt{m!n!} \hspace{2mm} \delta^{n-k} (-\delta^{*})^{m-k}}{k!(m-k)!(n-k)!}
    \sum_{k'=0}^{\min(m,n')} \frac{\sqrt{m!n'!} \hspace{2mm} (\delta^*)^{n'-k'} (-\delta)^{m-k'}}{k'!(m-k')!(n'-k')!} \nonumber\\ &
    = e^{-|\delta|^2} \sqrt{n! n'!} \hspace{2mm}\delta^n (\delta^*)^{n'} \hspace{1mm}
    \sum_{k=0}^{n} \frac{(-|\delta|^2)^{-k}}{k!(n-k)!} \hspace{1mm}
    \sum_{k'=0}^{n'} \frac{(-|\delta|^2)^{-k'}}{k'!(n'-k')!} 
    \sum_{m=\max(k,k')}^{\infty} \frac{m! \big((1-\eta_d)|\delta|^2\big)^m }{(m-k)!(m-k')!}.
\end{align}
Applying the following Lemma~\ref{lemma:equations}.(a) to \eqref{Effective Measurements summations}, the expression for $\mathcal{M}^0$ simplifies to
\begin{align} \label{Local Measurement Matrix}
    \bra{n} \mathcal{M}_0 \ket{n'} &
    = e^{-\eta_d |\delta|^2} \sqrt{n! n'!} \hspace{2mm}\delta^n (\delta^*)^{n'} \hspace{1mm}
    \sum_{k=0}^{n} \frac{(\eta_d-1)^{k}}{k!(n-k)!} \hspace{1mm}
    \sum_{k'=0}^{n'} \frac{(\eta_d-1)^{k'}}{k'!(n'-k')!}
    \sum_{i=0}^{\min(k,k')} \frac{i!}{\big((1-\eta_d)|\delta|^2\big)^i} {k \choose i}{k'\choose i} 
    \nonumber \\ &
    = e^{-\eta_d |\delta|^2} \sqrt{n! n'!} \hspace{2mm}\delta^n (\delta^*)^{n'} \hspace{1mm}
    \sum_{i=0}^{\min(n,n')} \frac{(1-\eta_d)^{i}}{i! |\delta|^{2i}}
    \sum_{k=i}^{n} \frac{(\eta_d-1)^{k-i}}{(n-k)!(k-i)!}
    \sum_{k'=i}^{n'} \frac{(\eta_d-1)^{k'-i}}{(n'-k')!(k'-i)!} \nonumber \\ &
    = e^{-\eta_d |\delta|^2} \sqrt{n! n'!} \hspace{2mm} \eta_d^{n+n'}\delta^n (\delta^*)^{n'} \hspace{1mm}
    \sum_{i=0}^{\min(n,n')} \frac{(1-\eta_d)^{i}}{i!(n-i)!(n'-i)! {(\eta_d |\delta|)}^{2i}},
\end{align}
where in the second line, we changed the order of summations, and in the last line, we used binomial expansion of\break $(1+(\eta_d-1))^{n-i}$ and $(1+(\eta_d-1))^{n'-i}$.

\medskip\noindent
\begin{lemma}\label{lemma:equations}
For a real number $x\in[0,1)$ and non-negative integers m, k, and k', the following equations hold
\begin{enumerate}[label=\textnormal{(\alph*)}]
    \item 
    \begin{align}
        \sum_{n=\max(k,k')}^{\infty} \frac{n! \ x^n }{(n-k)!(n-k')!}
        & = e^{x} \sum_{i=0}^{\min(k,k')} \frac{k! \, k'! \ x^{k+k'-i}}{i! (k-i)! (k'-i)!} \nonumber\\
        \xRightarrow{k=k'} 
        \ \sum_{n=k}^{\infty} \frac{n! \ x^n }{(n-k)!^2}
        & = e^{x} x^k \sum_{i=0}^{k} \frac{k!}{i!}\binom{k}{i} x^{i},\nonumber
    \end{align}
    
    \item 
    \begin{displaymath}
        \sum_{n=k}^{\infty} \frac{n.n! \ x^n}{(n-k)!^2}
        = e^x x^k \sum_{i=0}^k \frac{k!}{i!}\Big[\binom{k+1}{i+1} x^{i+1} + k \binom{k}{i} x^i \Big],
    \end{displaymath}
    
    \item 
    \begin{displaymath}
        \sum_{n=k}^{\infty} {n \choose k} x^n  
        = \frac{x^k}{(1-x)^{k+1}},
    \end{displaymath}
    
    \item
    \begin{displaymath}
        \sum_{n=k}^{\infty} n {n \choose k} x^{n-1}  
        = \frac{x^k + k x^{k-1}}{(1-x)^{k+2}}.
    \end{displaymath}
\end{enumerate}
\end{lemma}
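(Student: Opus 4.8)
The four identities are all generating-function statements, so the plan is to dispatch the two elementary ones, (c) and (d), directly, then establish the central identity (a) via a falling-factorial expansion, and finally reduce (b) to (a).

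I would begin with (c), which is just the negative binomial series: after the shift $m=n-k$ the left-hand side becomes $x^k\sum_{m=0}^{\infty}\binom{m+k}{k}x^m$, and the standard identity $\sum_{m=0}^{\infty}\binom{m+k}{k}x^m=(1-x)^{-(k+1)}$ (the $k$-th derivative of the geometric series, valid for $x\in[0,1)$) finishes it. Identity (d) then follows by applying $\frac{d}{dx}$ to (c) termwise and simplifying the right-hand side; the only care needed is combining $kx^{k-1}(1-x)^{k+1}+(k+1)x^k(1-x)^k$ over the common denominator $(1-x)^{k+2}$ to recover $x^k+kx^{k-1}$ in the numerator.

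The heart of the lemma is (a). Here I would first substitute $m=n-k$ to pull out $x^k$ and rewrite the summand as $\frac{(m+k)^{\underline{k'}}}{m!}$, where $y^{\underline{j}}$ denotes the falling factorial. The key tool is the Vandermonde identity for falling factorials, $(m+k)^{\underline{k'}}=\sum_{j=0}^{k'}\binom{k'}{j}m^{\underline{j}}k^{\underline{k'-j}}$, which reduces the task to evaluating $\sum_{m\ge 0}\frac{m^{\underline{j}}}{m!}x^m$. Since $\frac{m^{\underline{j}}}{m!}=\frac{1}{(m-j)!}$ for $m\ge j$ and vanishes otherwise, this sum reduces to $x^j e^x$ after shifting the index by $j$, and this is precisely where the factor $e^x$ on the right-hand side originates. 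Interchanging the finite $j$-sum with the $m$-sum, substituting, and reindexing by $i=k'-j$ then produces the coefficient $\frac{k!\,k'!}{i!(k-i)!(k'-i)!}$; the upper limit collapses from $k'$ to $\min(k,k')$ automatically because $k^{\underline{i}}$ vanishes once $i>k$. The $k=k'$ specialization is then immediate.

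Finally, (b) follows from (a). I would write $n=(n-k)+k$ so that $\frac{n\,n!}{(n-k)!^2}=\frac{n!}{(n-k)!(n-k-1)!}+k\frac{n!}{(n-k)!^2}$. The second term is $k$ times the $k=k'$ case of (a), while the first is precisely the $(k,k+1)$ instance of (a); adding the two evaluated series and reindexing reproduces the bracketed expression $\binom{k+1}{i+1}x^{i+1}+k\binom{k}{i}x^i$. I expect the \emph{main obstacle} throughout to be the bookkeeping in (a): correctly invoking the falling-factorial Vandermonde identity, interchanging the summations, and matching indices so that the $e^x$ prefactor and the $\min(k,k')$ cutoff emerge exactly, rather than any deep analytic difficulty. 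Convergence is never in question, since the summand ratio $\frac{n+1}{(n+1-k)(n+1-k')}x\to 0$ for (a) and (b), while (c) and (d) reduce to geometric-type series on $[0,1)$.
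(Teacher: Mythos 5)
Your proposal is correct, and for parts (b), (c), (d) it coincides with the paper's own proof: the paper likewise obtains (c) by taking the $k$-th derivative of the geometric series $\sum_{n\ge 0}x^n=(1-x)^{-1}$, obtains (d) by differentiating (c), and proves (b) by the identical split $n=(n-k)+k$ followed by two invocations of (a) (with $k'=k+1$ and $k'=k$) and the reindexing $i\to k-i$. The only genuine divergence is in the proof of the central identity (a). The paper factors out $x^{m}$ with $m=\min(k,k')$, $M=\max(k,k')$, recognizes the remaining series as $\partial_x^{m}\bigl(x^{M}e^{x}\bigr)$, and applies the Leibniz product rule, so that the factor $e^x$, the binomial coefficients, and the upper limit $\min(k,k')$ all fall out of the $m$-fold differentiation. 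You instead argue purely combinatorially: you expand $(m+k)^{\underline{k'}}$ by the Vandermonde convolution for falling factorials, evaluate the elementary series $\sum_{m\ge 0} m^{\underline{j}}x^m/m! = x^j e^x$, and let the cutoff at $\min(k,k')$ emerge from the vanishing of $k^{\underline{i}}$ for $i>k$. The two arguments are cognate --- Leibniz's rule for $\partial_x^m(x^M e^x)$ is exactly the analytic counterpart of the Vandermonde convolution you invoke, and the coefficients that emerge are term-by-term identical --- but yours avoids calculus in (a) entirely (the only interchange needed is of a finite sum with an absolutely convergent one, rather than of $\partial_x^m$ with an infinite sum), and it cleanly justifies extending the lower summation limit from $\max(k,k')$ down to $k$ via vanishing falling factorials, a point the paper's proof sidesteps by construction since it indexes the sum from $\max(k,k')$ directly. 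The paper's route is somewhat shorter; yours makes the combinatorial origin of the coefficients $k!\,k'!/\bigl(i!\,(k-i)!\,(k'-i)!\bigr)$ more transparent. Both are complete and correct.
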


\begin{proof}
\begin{enumerate}[label=\textnormal{(\alph*)}]
   \item By defining $m:= \min(k,k')$ and $M:=\max(k,k')$, the LHS could be written as
    \begin{align}
        \text{LHS} & = x^m \sum_{i=0}^\infty \frac{x^{i+M-m} (i+M)!}{i!(i+M-m)!} 
        = x^m \frac{\partial^m}{\partial x^m} \Big(\sum_{i=0}^\infty \frac{x^{i+M}}{i!} \Big)
        = x^m \frac{\partial^m}{\partial x^m} (x^M e^x) \nonumber\\
        & = x^m \sum_{i=0}^m \binom{m}{i} \frac{\partial^i (x^M)}{\partial x^i} \frac{\partial^{m-i} (e^x)}{\partial x^{m-i}}
        = e^x x^m \sum_{i=0}^m \binom{m}{i} \frac{M! \ x^{M-i}}{(M-i)!} = RHS\ . \nonumber
    \end{align}
            
    \item We can divide the expression in  LHS into the 2 terms:
    \begin{displaymath}
        \sum_{n=k}^{\infty} \frac{n.n! \ x^n}{(n-k)!^2} = \sum_{n=k}^\infty \frac{n! \ x^n}{(n-k)! (n-k-1)!}
        + k \sum_{n=k}^{\infty} \frac{n! \ x^n}{(n-k)!^2}.
    \end{displaymath}
    Applying part (a) with $k'=k+1$ for the first term, and $k'=k$ for the second, we have:
    \begin{displaymath}
        \text{LHS} = e^x x^k \sum_{i=0}^{k} \Big[\frac{k!(k+1)!\ x^{k+1-i}}{i!(k-i)!(k+1-i!)}
        + \frac{k!^2\ x^{k-i}}{i!(k-i)!^2} \Big].
    \end{displaymath}
    The proof is complete upon applying the substitution $i\rightarrow k-i$.
    
    \item It follows by taking the $k-$th derivative of the identity $\sum_{n=0}^\infty x^n=(1-x)^{-1}$, which is valid for $|x|<1$:
    \begin{displaymath}
        \sum_{n=k}^{\infty} {n \choose k} x^{n} = \frac{x^k}{k!}\sum_{n=k}^{\infty} \frac{n! \ x^{n-k}}{(n-k)!}
        = \frac{x^k}{k!} \frac{\partial^k}{\partial x^k} \big( \sum_{n=0}^\infty x^n \big) 
        = \frac{x^k}{k!} \frac{\partial^k (1-x)^{-1}}{\partial x^k} 
        = \frac{x^k}{(1-x)^{k+1}}.
    \end{displaymath}
    
    \item
    \begin{displaymath}
        \sum_{n=k}^{\infty} n {n \choose k} x^{n-1}  
        = \frac{\partial}{\partial x} 
        \Bigl[\sum_{n=k}^{\infty} {n \choose k} x^n \Bigr] 
        = \frac{\partial}{\partial x} 
        \Bigl[\frac{x^k}{(1-x)^{k+1}}\Bigr]
        = \frac{x^{k-1} (x+k)}{(1-x)^{k+2}}.
    \end{displaymath}
    \end{enumerate}
\end{proof}

\section{Derivation of marginals and joint probability distributions}
\label{Distributions-sec}
This section presents the derivation of an analytical formula for marginals and joint probabilities of detecting zero photons for the 1-photon and 2-photon protocols under non-ideal assumptions: lossy transmission channels and imperfect local detection.

\subsection{The 1-photon protocol} \label{App:Imperfect-lossy-1ph}

\noindent
\textbf{I) Marginals:}
We first need to write the reduced density matrix of the subsystem $A$ (or $B$) by tracing out the complementary subsystem ${B}$ (or $A$) from the $\rho_\text{out}^{(k,\sigma)}$, as
\begin{align}
    \rho_\text{out}^{(a)}
    & = \mathrm{Tr}_{B} \big( \rho_\text{out}^{(k,\sigma)}\big) =
    \frac{(1-r_t \tanh^2{g})^{\sigma+2}}{(r_t \tanh^2{g})^\sigma}
    \sum_{S=\sigma}^{\infty} (r_t \tanh^2{g})^S
    \sum_{n=0}^{S} \ket{n}\bra{n}
    \sum_{p=\max(0,n-\sigma)}^{\min(S-\sigma,n)} {n \choose p}{S-n \choose S-\sigma-p} |\mathcal{A}_\sigma(k,n-p)|^2.
\end{align}
By changing the order of summations, shifting the indices $p\rightarrow n-p$ and $S \rightarrow S+n$, and defining $C:= r_t \tanh^2{g}$, we obtain:
\begin{align}
    \rho_\text{out}^{(a)}
    &  = \frac{(1-C)^{\sigma+2}}{C^\sigma}
    \sum_{p=0}^{\sigma} {n \choose p}|\mathcal{A}_\sigma(k,p)|^2 \hspace{1mm}
    \sum_{n=p}^{\infty} C^n \ket{n}\bra{n}
    \sum_{S=\sigma-p}^{\infty} C^S {S \choose \sigma-p} \nonumber\\
    &  = (1-C)
    \sum_{p=0}^{\sigma} \left(\frac{1-C}{C}\right)^p |\mathcal{A}_\sigma(k,p)|^2 \hspace{1mm}
    \sum_{n=p}^{\infty} {n \choose p} C^n \ket{n}\bra{n}.
\end{align}
In the second line we used the identity $\sum_{S=k}^{\infty} x^S {S \choose k} = x^k/(1-x)^{k+1}$ from Lemma~\ref{lemma:equations}.(c). 
Now, by substituting reduced density matrix $\rho_\text{out}^{(x)}$ into $\mathrm{Tr} \big(\rho_\text{out}^{(x)}\, \mathcal{M}_0 \big)$, where $x$ corresponds either to subsystem $A$ or $B$, we can calculate the marginal probabilities:
\begin{align}
    P(0|\delta) &
    = \mathrm{Tr} \big(\rho_\text{out}^{(x)}\, \mathcal{M}_0\big) \nonumber
    = (1-C) 
    \sum_{p=0}^{\sigma} \left(\frac{1-C}{C}\right)^p |\mathcal{A}_\sigma(k,p)|^2 \hspace{1mm}
    \sum_{n=p}^{\infty} {n \choose p} C^n \bra{n}\mathcal{M}_0\ket{n}
    \\ & \overset{\ref{Local Measurement Matrix}}
    = (1-C) e^{-\eta_d |\delta|^2}
    \sum_{p=0}^{\sigma} \left(\frac{1-C}{C}\right)^p |\mathcal{A}_\sigma(k,p)|^2 \hspace{1mm} 
    \sum_{n=p}^{\infty} {n \choose p} \big( C \eta_d^2 |\delta|^{2} \big)^{n} 
    \sum_{i=0}^{n} \frac{\big(\frac{1-\eta_d}{\eta_d^2 |\delta|^{2}}\big)^i}{i!}\frac{n!}{(n-i)!^2}.\nonumber
\end{align}
For the special case of $\sigma=k=1$, corresponding to the 1-photon protocol, since $|\mathcal{A}_\sigma(k,p)|^2=\frac{1}{2}$, we have
\begin{align}
    P_\text{1ph}(0|\delta) &
    = \frac{1-C}{2} \hspace{1mm} e^{-\eta_d |\delta|^2}
    \sum_{i=0}^{\infty} \frac{\big(\frac{1-\eta_d}{\eta_d^2 |\delta|^{2}}\big)^i}{i!} \sum_{n=i}^{\infty} \frac{(C \eta_d^2 |\delta|^{2})^{n} n!}{(n-i)!^2}
    \sum_{p=0}^{1} {n \choose p} \left(\frac{1-C}{C}\right)^p \nonumber \\ &
    = \frac{1-C}{2} \hspace{1mm} e^{-\eta_d|\delta|^2}
    \sum_{i=0}^{\infty} \frac{\big(\frac{1-\eta_d}{\eta_d^2 |\delta|^{2}}\big)^i}{i!} \sum_{n=i}^{\infty} \frac{(C \eta_d^2 |\delta|^{2})^{n} n!}{(n-i)!(n-i)!} \left(1+n\left(\frac{1-C}{C}\right)\right).
\end{align}
Using Lemma~\ref{lemma:equations}.(b) we get
\begin{align}
    P_\text{1ph}(0|\delta) &
    =  \scalebox{1.2}{$\frac{1-C}{2}$} 
    \hspace{1mm} e^{(C\eta_d^2-\eta_d)|\delta|^2}
    \sum_{i=0}^{\infty} \frac{\big(C(1-\eta_d)\big)^i}{i!}
    \Bigg[ \scalebox{1.2}{$\left(1+\frac{1-C}{C}i\right)$}
    \sum_{j=0}^i \frac{i!}{j!} {i \choose j} (C\eta_d^2|\delta|^2)^j 
    + \scalebox{1.2}{$\frac{1-C}{C}$}
    \sum_{j=0}^i \frac{i!}{j!} {i+1 \choose j+1} (C\eta_d^2|\delta|^2)^{j+1} \Bigg] \nonumber \\ &
    =  \frac{1-C}{2} \ e^{(C\eta_d^2-\eta_d)|\delta|^2}
    \sum _{j=0}^{\infty} \frac{(C\eta_d^2|\delta|^2)^j}{j!}
    \sum_{i=j}^{\infty} \Bigg[ \scalebox{1.2}{$\frac{2C-1}{C}$} {i \choose j}
    + \Big( \scalebox{1.2}{$\frac{1-C}{C}$}(j+1)+ (1-C)\eta_d^2|\delta|^2 \Big) {i+1 \choose j+1}\Bigg]
    \big(C(1-\eta_d)\big)^i. \nonumber \\
\end{align}
Then, by using $\sum_{S=k}^{\infty} x^S {S \choose k} = x^k/(1-x)^{k+1}$ we obtain
\begin{align}
    P_\text{1ph}(0|\delta) &
    =  \frac{1-C}{2} \hspace{1mm} e^{(C\eta_d^2-\eta_d)|\delta|^2}
    \sum _{j=0}^{\infty} \frac{(C\eta_d^2|\delta|^2)^j}{j!}
    \Bigg[\scalebox{1.2}{$\frac{2C-1}{C}$} \frac{(C(1-\eta_d))^j}{(1-C(1-\eta_d))^{j+1}}
    + \scalebox{1.2}{$\frac{1-C}{C}$} \big(j+1 + C\eta_d^2 |\delta|^2\big) \frac{(C(1-\eta_d))^j}{(1-C(1-\eta_d))^{j+2}} \Bigg] \nonumber \\ &
    =  \scalebox{1.2}{$\frac{1-C}{2}$}
    \hspace{1mm} e^{(C\eta_d^2-\eta_d)|\delta|^2}
    \sum _{j=0}^{\infty} \frac{1}{j!}
    \Big(\frac{C^2(1-\eta_d)\eta_d^2|\delta|^2}{1-C(1-\eta_d)}\Big)^j
    \Bigg[ \scalebox{1.17}{$\frac{2C-1}{C(1-C(1-\eta_d))}$}
    +  \scalebox{1.17}{$\frac{(1-C)(1+C\eta_d^2|\delta|^2)}{C(1-C(1-\eta_d))^2}$}
    +  \scalebox{1.17}{$\frac{C(1-C)(1-\eta_d)\eta_d^2|\delta|^2}{(1-C(1-\eta_d))^3}$}
 \Bigg].
\end{align}
Finally, by applying the series expansion of $e^x$ and simplifying the expression, we get
\begin{equation}
    \boxed{P_\text{1ph}(0|\delta) = 
    \frac{1-C}{2}
    \Bigg[ \frac{2(1-C(1-\eta_d))^2 - \eta_d(1-C(1-\eta_d)) + (1-C)\eta_d^2|\delta|^2}{(1-C(1-\eta_d))^3}\Bigg]
    \exp{\Big(-\frac{\eta_d (1-C)|\delta|^2}{1-C(1-\eta_d)}\Big).}}
\end{equation}
In the lossless transmission scenario, $r=0$, by substituting $C=r_t \tanh^2 g =0 $, the expression reduces to:
\begin{equation}
    P_\text{1ph}(0|\delta) = \frac{2 -\eta_d + \eta_d^2 \lvert \delta \rvert^2}{2} e^{- \eta_d \lvert \delta \rvert^2}.
\end{equation}
For the perfect detection case ($\eta_d=1$), by replacing $C = r_t \tanh^2 g$ we obtain
\begin{align}
    P_\text{1ph}(0|\delta) = \frac{(1-r_t\tanh^2{g})}{2} \Big(1+ (1-r_t\tanh^2{g}) \lvert \delta \rvert^{2} \Big) e^{- (1-r_t\tanh^2{g}) \lvert \delta \rvert^{2}}.
\end{align}

\bigskip
\noindent
\textbf{II) Joint probabilities:}
To find the joint probability distribution for the state in \ref{rho_out_separate}, we need to insert it in
\begin{align}
    & P(00|\alpha \beta) 
    = \; \mathrm{Tr}\Big[\rho_{out}^{(k,\sigma)}\Big(D(\alpha) E_{0}^{\eta_{d}} D^{\dagger}(\alpha) \otimes D(\beta) E_{0}^{\eta_{d}} D^{\dagger}(\beta)\Big) \Big]
    = (1-C)^{\sigma+2} \sum_{p,q=0}^{\sigma} \mathcal{A}_\sigma(k,p) \mathcal{A}^*_\sigma(k,q) \\
    & \times \Big[ \sum_{m=0}^{\infty} C^m 
    \scalebox{1.25}{$\sqrt{\binom{m+p}{m}\binom{m+q}{m}}$} \bra{m+q}\mathcal{M}^0_A\ket{m+p} \Big] 
    \Big[ \sum_{n=0}^{\infty} C^n 
    \scalebox{1.25}{$\sqrt{\binom{n+\sigma-p}{n}\binom{n+\sigma-q}{n}}$} \bra{n+\sigma-q}\mathcal{M}^0_B\ket{n+\sigma-p} \Big] \nonumber.
\end{align}
Using \eqref{Local Measurement Matrix}, we obtain
\begin{align}
    &P(00|\alpha \beta) = 
    e^{-\eta_d (|\alpha|^2+|\beta|^2)} \eta_d^{2\sigma}(1-C)^{\sigma+2}
    \sum_{p,q=0}^{\sigma} 
    \frac{(\alpha^*)^{p} (\beta^*)^{\sigma-p} \mathcal{A}_\sigma(k,p)}{\sqrt{p! (\sigma-p)!}} .
    \frac{\alpha^q \beta^{\sigma-q} \mathcal{A}^*_\sigma(k,q)}{\sqrt{q!(\sigma-q)!}}\\
    &\times \Big[
     \sum_{i=0}^\infty \frac{(\frac{1-\eta_d}{\eta_d^2 |\alpha|^2})^{i}}{i!}
    \sum_{\substack{m \geq 0 \\ m\geq i-p \\ m \geq i-q}}^{\infty} \frac{(C\eta_d^2 |\alpha|^2)^{m} (m+p)!(m+q)!}{m! (m+p-i)!(m+q-i)!}\Big]
    \Big[ \sum_{j=0}^\infty \frac{(\frac{1-\eta_d}{\eta_d^2 |\beta|^2})^{j}}{j!}
    \sum_{\substack{n \geq 0 \\ n \geq j+p-\sigma \\ n\geq j+q-\sigma}}^{\infty} \frac{(C\eta_d^2 |\beta|^2)^{n} (n+\sigma-p)!(n+\sigma-q)!}{n! (n+\sigma-p-j)!(n+\sigma-q-j)!}\Big].\nonumber
\end{align}
For the 1-photon protocol, by setting $\sigma=k=1$, the expression for $p(00|\alpha\beta)$  includes four terms corresponding to the possible pairs $(p,q) \in \{(0,0), (0,1), (1,0), (1,1)\}$:
\begin{align}
    &P_\text{1ph}(00|\alpha \beta) = 
    e^{-\eta_d (|\alpha|^2+|\beta|^2)} \frac{\eta_d^{2}(1-C)^{3}}{2} \times \nonumber\\
    & \Bigg[|\beta|^2
    \sum_{i=0}^\infty \frac{(\frac{1-\eta_d}{\eta_d^2 |\alpha|^2})^{i}}{i!}
    \sum_{m=i}^{\infty} \frac{(C\eta_d^2 |\alpha|^2)^{m}\, m!}{ (m-i)!(m-i)!} .
    \sum_{j=0}^\infty \frac{(\frac{1-\eta_d}{\eta_d^2 |\beta|^2})^{j}}{j!}
    \sum_{n=j}^{\infty} \frac{(C\eta_d^2 |\beta|^2)^{n-1} \ n. n!}{ (n-j)!(n-j)!} \nonumber\\
    & - i \alpha \beta^*
    \sum_{i=0}^\infty \frac{(\frac{1-\eta_d}{\eta_d^2 |\alpha|^2})^{i}}{i!}
    \sum_{m=i}^{\infty} \frac{(C\eta_d^2 |\alpha|^2)^{m} (m+1)!}{(m-i)!(m+1-i)!}.
    \sum_{j=0}^\infty \frac{(\frac{1-\eta_d}{\eta_d^2 |\beta|^2})^{j}}{j!}
    \sum_{n=j}^{\infty} \frac{(C\eta_d^2 |\beta|^2)^{n} (n+1)!}{(n+1-j)!(n-j)!} \nonumber\\
    & + i\alpha^* \beta
     \sum_{i=0}^\infty \frac{(\frac{1-\eta_d}{\eta_d^2 |\alpha|^2})^{i}}{i!}
    \sum_{m=i}^{\infty} \frac{(C\eta_d^2 |\alpha|^2)^{m} (m+1)!}{(m+1-i)!(m-i)!} .
    \sum_{j=0}^\infty \frac{(\frac{1-\eta_d}{\eta_d^2 |\beta|^2})^{j}}{j!}
    \sum_{n=j}^{\infty} \frac{(C\eta_d^2 |\beta|^2)^{n} (n+1)!}{(n-j)!(n+1-j)!} \nonumber \\
    & + |\alpha|^2  
    \sum_{i=0}^\infty \frac{(\frac{1-\eta_d}{\eta_d^2 |\alpha|^2})^{i}}{i!}
    \sum_{m=i}^{\infty} \frac{(C\eta_d^2 |\alpha|^2)^{m-1} \, m.m!}{ (m-i)!(m-i)!} .
    \sum_{j=0}^\infty \frac{(\frac{1-\eta_d}{\eta_d^2 |\beta|^2})^{j}}{j!}
    \sum_{n=j}^{\infty} \frac{(C\eta_d^2 |\beta|^2)^{n} \, n!}{(n-j)!(n-j)!}\Bigg].
\end{align}
By applying Lemma~\ref{lemma:equations}.(a,b) to the summations over $m$ and $n$, we obtain
\begin{align}
    & P_\text{1ph}(00|\alpha \beta) = 
    e^{(C\eta_d^2-\eta_d) (|\alpha|^2+|\beta|^2)} \times\frac{\eta_d^{2}(1-C)^{3}}{2} \times \nonumber\\
    & \Bigg[ |\alpha|^2 
    \sum_{i=0}^\infty \frac{(\frac{1-\eta_d}{\eta_d^2 |\alpha|^2})^{i}}{i!}
    \sum_{u=0}^{i} \frac{i!}{u!} \Big[ \binom{i+1}{u+1}(C\eta_d^2 |\alpha|^2)^{i+u} + i\binom{i}{u} (C\eta_d^2 |\alpha|^2)^{i+u-1}\Big].
    \sum_{j=0}^\infty \frac{(\frac{1-\eta_d}{\eta_d^2 |\beta|^2})^{j}}{j!}
    \sum_{v=0}^{j} \binom{j}{v}\frac{j!(C\eta_d^2 |\beta|^2)^{j+v}}{v!}\nonumber\\
    & + |\beta|^2 
    \sum_{i=0}^\infty \frac{(\frac{1-\eta_d}{\eta_d^2 |\alpha|^2})^{i}}{i!}
    \sum_{u=0}^{i} \binom{i}{u}\frac{i!(C\eta_d^2 |\alpha|^2)^{i+u}}{u!}.
    \sum_{j=0}^\infty \frac{(\frac{1-\eta_d}{\eta_d^2 |\beta|^2})^{j}}{j!}
    \sum_{v=0}^{j} \frac{j!}{v!} \Big[ \binom{j+1}{v+1}(C\eta_d^2 |\beta|^2)^{j+v} + j\binom{j}{v} (C\eta_d^2 |\beta|^2)^{j+v-1}\Big]\nonumber\\
    & + 2 \text{Im}(\alpha \beta^*) 
    \sum_{i=0}^\infty \frac{(\frac{1-\eta_d}{\eta_d^2 |\alpha|^2})^{i}}{i!}
    \sum_{u=0}^{i} \binom{i+1}{u+1}\frac{i!(C\eta_d^2 |\alpha|^2)^{i+u}}{u!}.
    \sum_{j=0}^\infty \frac{(\frac{1-\eta_d}{\eta_d^2 |\beta|^2})^{j}}{j!}
    \sum_{v=1}^{j} \binom{j+1}{v+1}\frac{j!(C\eta_d^2 |\beta|^2)^{j+v}}{v!}
    \Bigg].
\end{align}
where $\text{Im}(x)$ returns the imaginary part of a complex number $x$.
Changing the order of summation yields us
\begin{align}
    & P_\text{1ph}(00|\alpha \beta) = 
    e^{(C\eta_d^2-\eta_d) (|\alpha|^2+|\beta|^2)} \times\frac{\eta_d^{2}(1-C)^{3}}{2} \times \nonumber\\
    & \Bigg[ |\alpha|^2 
    \sum_{u=0}^\infty \frac{(C\eta_d^2 |\alpha|^2)^{u}}{u!}
    \sum_{i=u}^{\infty} (C(1-\eta_d))^{i} \Big[ \binom{i+1}{u+1} + \frac{i\binom{i}{u}} {C\eta_d^2 |\alpha|^2}\Big].
    \sum_{v=0}^\infty \frac{(C\eta_d^2 |\beta|^2)^{v}}{v!}
    \sum_{j=v}^{\infty} \binom{j}{v}(C(1-\eta_d))^{j}\nonumber\\
    & + |\beta|^2 
    \sum_{u=0}^\infty \frac{(C\eta_d^2 |\alpha|^2)^{u}}{u!}
    \sum_{i=u}^{\infty} \binom{i}{u}(C(1-\eta_d))^{i}.
    \sum_{v=0}^\infty \frac{(C\eta_d^2 |\beta|^2)^{v}}{v}
    \sum_{j=v}^{\infty} (C(1-\eta_d))^{j} \Big[ \binom{j+1}{v+1} + \frac{j\binom{j}{v}} {C\eta_d^2 |\beta|^2}\Big] \nonumber\\
    & + 2 \ \text{Im}(\alpha \beta^*) 
    \sum_{u=0}^\infty \frac{(C\eta_d^2 |\alpha|^2)^{u}}{u!}
    \sum_{i=u}^{\infty} \binom{i+1}{u+1}(C(1-\eta_d))^{i}.
    \sum_{v=0}^\infty \frac{(C\eta_d^2 |\beta|^2)^{v}}{v!}
    \sum_{j=v}^{\infty} \binom{j+1}{v+1}(C(1-\eta_d))^{j}
    \Bigg]
\end{align}
Next, we use Lemma~\ref{lemma:equations}.(c,d) for summations over $i$ and $j$:
\begin{align}
    & P_\text{1ph}(00|\alpha \beta) = 
    e^{(C\eta_d^2-\eta_d) (|\alpha|^2+|\beta|^2)} \times\frac{\eta_d^{2}(1-C)^{3}}{2} \times \nonumber\\
    & \Bigg[ |\alpha|^2 
    \sum_{u=0}^\infty \frac{(C\eta_d^2 |\alpha|^2)^{u}}{u!}
    \Big[ \frac{\big(C(1-\eta_d)\big)^u}{\big(1-C(1-\eta_d)\big)^{u+2}} 
    + \frac{\big(C(1-\eta_d)\big)^{u+1} + u \big(C(1-\eta_d)\big)^{u}}{C\eta_d^2|\alpha|^2 \big(1-C(1-\eta_d)\big)^{u+2}} \Big].
    \sum_{v=0}^\infty \frac{(C\eta_d^2 |\beta|^2)^{v}}{v!}
    \frac{\big(C(1-\eta_d)\big)^v}{\big(1-C(1-\eta_d)\big)^{v+1}} \nonumber\\
    & + |\beta|^2 
    \sum_{u=0}^\infty \frac{(C\eta_d^2 |\alpha|^2)^{u}}{u!}
    \frac{\big(C(1-\eta_d)\big)^u}{\big(1-C(1-\eta_d)\big)^{u+1}}.
    \sum_{v=0}^\infty \frac{(C\eta_d^2 |\beta|^2)^{v}}{v!}
    \Big[ \frac{\big(C(1-\eta_d)\big)^v}{\big(1-C(1-\eta_d)\big)^{v+2}} 
    + \frac{\big(C(1-\eta_d)\big)^{v+1} + v \big(C(1-\eta_d)\big)^{v}}{C\eta_d^2|\beta|^2 \big(1-C(1-\eta_d)\big)^{v+2}} \Big] \nonumber\\
    & + 2 \ \text{Im}(\alpha \beta^*) 
    \sum_{u=0}^\infty \frac{(C\eta_d^2 |\alpha|^2)^{u}}{u!}
    \frac{\big(C(1-\eta_d)\big)^u}{\big(1-C(1-\eta_d)\big)^{u+2}}.
    \sum_{v=0}^\infty \frac{(C\eta_d^2 |\beta|^2)^{v}}{v!}
    \frac{\big(C(1-\eta_d)\big)^v}{\big(1-C(1-\eta_d)\big)^{v+2}}
    \Bigg].
\end{align}
Using the Taylor series expansion for $e^x=\sum\limits_{n=0}^\infty \frac{x^n}{n!}$, the joint probability simplifies to
\begin{align}
    \boxed{P_\text{1ph}(00|\alpha,\beta) 
    = \frac{(1 - C)^3}{2} \left( 
    \frac{2(1 - \eta_d)(1 - C(1 - \eta_d)) + \eta_d^2 \big(|\alpha|^2 + |\beta|^2 + 2\text{Im}(\alpha\beta^*)\big)}{ (1-C(1-\eta_d))^4} \right)
    \exp\left( - \frac{ \eta_d (1 - C) (|\alpha|^2 + |\beta|^2)}{1-C(1-\eta_d)} \right).} 
\end{align}
\par\vspace*{-4mm}
\noindent
In the lossless transmission scenario ($r=0$), by substituting $C = r_t \tanh^2 g =0 $, the expression reduces to

\vspace{-2mm}

\begin{align}
    P_\text{1ph}(00|\alpha \beta) = \frac{1}{2} e^{-\eta_d (\lvert \alpha \rvert^2 + \lvert \beta \rvert^2)}\Big[2 - 2\eta_d + \eta_d^{2} \big( \lvert \alpha \rvert^2 + \lvert \beta \rvert^2 + 2 \mathrm{Im}(\alpha \beta^{*} ) \big) \Big].
\end{align}
For the perfect detection case ($\eta_d=1$), replacing $C = r_t \tanh^2 g$, we obtain
\begin{align}
    P_\text{1ph}(00|{\alpha} {\beta}) = \frac{(1-r_t\tanh^2{g})^{3} }{2} 
    \Big[ \lvert {\alpha} \rvert^{2} + \lvert {\beta} \rvert^{2} + 2 \mathrm{Im}({\alpha} {\beta}^{*} ) \Big]
    e^{-(\lvert {\alpha} \rvert^{2}+\lvert {\beta} \rvert^{2})(1-r_t\tanh^2{g})}. 
\end{align}

\subsection{The 2-photon protocol}
\label{App:Imperfect-lossy-2ph}
For the 2-photon protocol, the calculations closely follow those of the 1-photon case.

\medskip
\noindent
\textbf{I) Marginals:} Since the state is not symmetric in the 2-photon protocol, Alice's marginal distribution differs from Bob's. 
To calculate these distributions, we first determine the reduced density matrices for Alice and Bob by taking the partial trace of the state in Eq.~\ref{rho_out_2ph}
\begin{align}\label{reduced_rho_out_2ph}
    \rho_\text{out}^{(a)} = & \, \frac{{(\mathcal{N N}_b)}^{-2} \sqrt{\eta_t}}{\cosh^2{g_b}}
    \Bigg[ t_c\sum\limits_{m=0}^\infty \lambda^{(a)}_m r_t^m \ketbra{m}{m}_{a_1} 
    .\frac{\partial}{\partial r_t} 
    + r_c \frac{\partial}{\partial r_t} \big(\sum\limits_{m=0}^\infty \lambda^{(a)}_m \, r_t^{m} \ketbra{m}{m}_{a_1}\big) \Bigg] \nonumber\\
    &\hspace{2cm} \times  \Big(\frac{\cosh^{-2}{g_b}}{1-(r_s + t_s r_t) \tanh^2{g_b}} - \frac{\cosh^{-2}{g_b}}{1-(1-\eta_s)(r_s + t_s r_t) \tanh^2{g_b}}\Big), \\
    \rho_\text{out}^{(b)} = & \ {(\mathcal{N N}_b)}^{-2} \sqrt{\eta_t} \,
    \sum\limits_{n=0}^\infty \lambda^{(b)}_n \big (1-(1-\eta_s)^n \big) \nonumber\\
    \ \times & \Bigg[\frac{t_c}{1+\sqrt{\eta_t}\,\sinh^2{g_a}}. 
    \frac{\partial}{\partial r_t} \Big(\sum\limits_{l=0}^{n} \binom{n}{l} r_s^{l} (r_t t_s)^{n-l} \ketbra{l}{l}_{b_1}\Big) 
    + \frac{r_c \sinh^2{g_a}}{(1+\sqrt{\eta_t}\,\sinh^2{g_a})^2} .
    \sum\limits_{l=0}^{n} \binom{n}{l} r_s^{l} (r_t t_s)^{n-l} \ketbra{l}{l}_{b_1}
    \Bigg]
\end{align}
Using a similar calculation as in the 1-photon protocol, with the symmetric BS in Charlie's ($t_c=r_c=0.50$), the marginals can be obtained as follows
\begin{align}\label{Marge_a_before}
    & P_\text{2ph}(0|\alpha)
    = \ \mathrm{Tr}(\rho_\text{out}^{(a)} \mathcal{M}_0^{(A)})
    =  \frac{1}{2} {(\mathcal{N N}_b)}^{-2} \sqrt{\eta_t}
    \Bigg[\sum\limits_{m=0}^\infty \lambda^{(a)}_m r_t^m \bra{m} \mathcal{M}_0^{(A)} \ket{m} 
    .\frac{\partial}{\partial r_t} 
    + \frac{\partial}{\partial r_t} \big(\sum\limits_{m=0}^\infty \lambda^{(a)}_m \, r_t^{m} \bra{m} \mathcal{M}_0^{(A)} \ket{m}\big) \Bigg] \nonumber\\
     & \hspace{6.7cm} \times \Big(\frac{\cosh^{-2}{g_b}}{1-(r_s + t_s r_t) \tanh^2{g_b}} - \frac{\cosh^{-2}{g_b}}{1-(1-\eta_s)(r_s + t_s r_t) \tanh^2{g_b}}\Big),
\end{align}
\begin{align}\label{Marge_b_before}
    & P_\text{2ph}(0|\beta)
    = \ \mathrm{Tr}(\rho_\text{out}^{(b)} \mathcal{M}_0^{(B)}) 
    = \frac{1}{2} {(\mathcal{N N}_b)}^{-2} \sqrt{\eta_t} \,
    \sum\limits_{n=0}^\infty \lambda^{(b)}_n \big (1-(1-\eta_s)^n \big) \\
    & \times \Bigg[\frac{1}{1+\sqrt{\eta_t}\,\sinh^2{g_a}}. 
    \frac{\partial}{\partial r_t} \Big(\sum\limits_{l=0}^{n} \binom{n}{l} r_s^{l} (t_s r_t)^{n-l} \bra{l} \mathcal{M}_0^{(B)} \ket{l}\Big) 
    + \frac{\sinh^2{g_a}}{(1+\sqrt{\eta_t}\,\sinh^2{g_a})^2} .
    \sum\limits_{l=0}^{n} \binom{n}{l} r_s^{l} (t_s r_t)^{n-l} \bra{l}\mathcal{M}_0^{(B)}\ket{l}
    \Bigg].\nonumber
\end{align}
Using Eq.~\eqref{Local Measurement Matrix}, 
$ \bra{m} \mathcal{M}_0 \ket{m} = e^{-\eta_d |\delta|^2} m! \ (\eta_d|\delta|)^{2m}
\sum\limits_{k=0}^{m} \frac{\big(\frac{1-\eta_d}{\eta_d^2 |\delta|^2}\big)^{k}}{k!(m-k)!^2}$, and defining $C_{t,a(b)}:=r_t \tanh^2{g_{a(b)}} \ ,\  C_{s,a(b)}:=r_s \tanh^2{g_{a(b)}}$, we get

\begin{align} \label{Marg_a_simp}
    \hspace{-2.5cm}\sum\limits_{m=0}^\infty \lambda^{(a)}_m r_t^m \bra{m} \mathcal{M}_0^{(A)} \ket{m}
    & = \cosh^{\!-2}{g_a} \ e^{-\eta_d|\alpha|^2} \sum\limits_{k=0}^{\infty} \frac{(\frac{1-\eta_d}{\eta_d^2 |\alpha|^2})^{k}}{k!}
    \sum\limits_{m=k}^\infty \frac{m! \, (C_{t,a}\eta_d^2|\alpha|^2)^{m}}{(m-k)!^2}
    \nonumber\\
    & \overset{(a)}{=} \cosh^{\!-2}{g_a}\ e^{-\eta_d|\alpha|^2} e^{C_{t,a}\eta_d^2|\alpha|^2}
    \sum_{i=0}^\infty \frac{(C_{t,a}\eta_d^2|\alpha|^2)^{i}}{i!}
    \sum\limits_{k=i}^{\infty} \binom{k}{i} (C_{t,a}(1-\eta_d))^k
    \nonumber\\
    & \overset{(c)}{=} \cosh^{\!-2}{g_a}\ e^{-\eta_d|\alpha|^2} e^{C_{t,a}\eta_d^2|\alpha|^2} 
    \sum_{i=0}^\infty \frac{(C_{t,a}\eta_d^2|\alpha|^2)^{i}}{i!}
    \frac{(C_{t,a}(1-\eta_d))^i}{(1-C_{t,a}(1-\eta_d))^{i+1}}
    \nonumber\\
    & = \frac{\cosh^{\!-2}{g_a}}{1-C_{t,a}(1-\eta_d)}
    \exp{\Big( -\frac{\eta_d (1-C_{t,a}) |\alpha|^2}{1-C_{t,a}(1-\eta_d)}\Big)},
\end{align}
\par\vspace*{-3mm}
\begin{align}\label{Marg_b_simp}
     \sum\limits_{n=0}^\infty \ \lambda^{(b)}_n & \big (1-(1-\eta_s)^n \big) 
    \sum\limits_{l=0}^{n} \binom{n}{l} r_s^{l} (t_s r_t)^{n-l} \bra{l} \mathcal{M}_0^{(B)} \ket{l} 
    \nonumber\\
    & \overset{\ref{Local Measurement Matrix}}{=} \cosh^{\!-2}{g_b}\ e^{-\eta_d |\beta|^2} 
    \sum\limits_{k=0}^{\infty} \frac{(\frac{1-\eta_d}{\eta_d^2 |\beta|^2})^{k}}{k!} 
    \sum\limits_{l=k}^{\infty} \frac{l! \,(\frac{r_s \eta_d^2 |\beta|^2}{t_s r_t})^l}{(l-k)!^2}
    \sum\limits_{n=l}^\infty \binom{n}{l} (C_{t,b} t_s)^{n} \big (1-(1-\eta_s)^n \big) \nonumber\\
    & \overset{(c)}{=} \cosh^{\!-2}{g_b}\ e^{-\eta_d |\beta|^2} 
    \sum\limits_{k=0}^{\infty} \frac{(\frac{1-\eta_d}{\eta_d^2 |\beta|^2})^{k}}{k!} 
    \sum\limits_{l=k}^{\infty} \frac{l! \, (\frac{r_s \eta_d^2 |\beta|^2}{t_s r_t})^l}{(l-k)!^2}
    \left[ \frac{(C_{t,b} t_s)^{l}}{(1-C_{t,b} t_s)^{l+1}}
    - \frac{\big(C_{t,b} t_s(1-\eta_s)\big)^{l}}{\big(1-C_{t,b} t_s(1-\eta_s)\big)^{l+1}} \right]\nonumber\\
    & \overset{(a)}{=} \cosh^{\!-2}{g_b}\ e^{-\eta_d |\beta|^2} 
    \Bigg[ \frac{e^{\frac{C_{s,b} \ \eta_d^2 |\beta|^2}{1-C_{t,b} t_s}}}{1-C_{t,b} t_s} 
    \sum_{i=0}^\infty \frac{\Big(\frac{C_{s,b} \ \eta_d^2 |\beta|^2}{1-C_{t,b} t_s}\Big)^i}{i!}
    \sum\limits_{k=i}^{\infty} \binom{k}{i} \Big(\frac{C_{s,b} (1-\eta_d)}{1-C_{t,b} t_s}\Big)^{k} 
    \nonumber\\
    & \hspace{3cm} - \frac{e^{\frac{C_{s,b} \ \eta_d^2 |\beta|^2(1-\eta_s)}{1-C_{t,b} t_s(1-\eta_s)}}}{1-C_{t,b} t_s(1-\eta_s)}
    \sum_{i=0}^\infty \frac{\Big(\frac{C_{s,b} \ \eta_d^2 |\beta|^2 (1-\eta_s)}{1-C_{t,b} t_s (1-\eta_s)}\Big)^i}{i!}
    \sum\limits_{k=i}^{\infty} \binom{k}{i} \Big(\frac{C_{s,b} (1-\eta_d) (1-\eta_s)}{1-C_{t,b} t_s (1-\eta_s)}\Big)^{k} \Bigg] 
    \nonumber\\
    & \overset{(c)}{=} \cosh^{\!-2}{g_b}\ e^{-\eta_d |\beta|^2} 
    \Bigg[ \frac{e^{\frac{C_{s,b} \ \eta_d^2 |\beta|^2}{1-C_{t,b} t_s}}}{1-C_{t,b} t_s - C_{s,b} (1-\eta_d)} 
    \sum_{i=0}^\infty \frac{\Big(\frac{C_{s,b}^2 (1-\eta_d)\eta_d^2 |\beta|^2}{(1-C_{t,b} t_s) (1-C_{t,b} t_s - C_{s,b} (1-\eta_d))}\Big)^i}{i!}
    \nonumber\\
    & \hspace{3cm} - \frac{e^{\frac{C_{s,b} \ \eta_d^2 |\beta|^2(1-\eta_s)}{1-C_{t,b} t_s(1-\eta_s)}}}{ 1-(C_{t,b} t_s+C_{s,b} (1-\eta_d)) (1-\eta_s)}
    \sum_{i=0}^\infty \frac{\Big(\frac{C_{s,b}^2 (1-\eta_d) \eta_d^2 |\beta|^2 (1-\eta_s)^2}{(1-C_{t,b} t_s (1-\eta_s)) (1-(C_{t,b} t_s+C_{s,b} (1-\eta_d)) (1-\eta_s))}\Big)^i}{i!}\Bigg] 
    \nonumber\\
    & = \cosh^{\!-2}{g_b}
    \Bigg[ \frac{\exp{\Big(-\frac{\eta_d (1-C_{t,b} t_s - C_{s,b}) |\beta|^2}{1-C_{t,b} t_s - C_{s,b} (1-\eta_d)}\Big)}}
    {1-C_{t,b} t_s - C_{s,b} (1-\eta_d)}
    - \frac{\exp{\Big(-\frac{\eta_d \big(1- (C_{t,b} t_s+C_{s,b})(1-\eta_s)\big) |\beta|^2 }{1-(C_{t,b} t_s+C_{s,b} (1-\eta_d)) (1-\eta_s)}\Big)}}
    {1-(C_{t,b} t_s+C_{s,b} (1-\eta_d)) (1-\eta_s)}\Bigg]
\end{align}
where we applied parts (a) and (c) of Lemma~\ref{lemma:equations}, and used the Taylor expansion of $e^x$ to simplify the expression. Substituting Eqs.~\eqref{Marg_a_simp} and \eqref{Marg_b_simp} into Eqs.~\eqref{Marge_a_before} and \eqref{Marge_b_before}, then yields the marginals $P_\text{2ph}(0|\alpha)$ and $P_\text{2ph}(0|\beta)$, respectively.
In the limit $g_b\ll 1$, where Bob's initial state approaches the SPPE state, the marginals is given by
\begin{align}
    \boxed{\tilde{P}_\text{2ph}(0|\alpha)
    = \exp{\Big(-\frac{(1-C_{t,a})\eta_d|\alpha|^2}{1-C_{t,a}(1-\eta_d)}\ \Big)}
    \!\times\!\left(\frac{1 - C_{t,a}}{1 - C_{t,a}(1 - \eta_d)}\right)^{\!2}
    \!\times\frac{ t_s \Bigl(1+\frac{C_{t,a}\eta_d^2 |\alpha|^2}{1 - C_{t,a} (1 -\eta_d)} \Bigr)
    +C_{s,a}\Bigl(1-\eta_d+\frac{\eta_d^2 |\alpha|^2}{1 - C_{t,a} (1 -\eta_d)}\Bigr)}
    {t_s + r_s \tanh^2{g_a}}}\nonumber
\end{align}
\vspace{-3mm}
\begin{align}
    \boxed{\tilde{P}_\text{2ph}(0|\beta) = 
    e^{-\eta_d |\beta|^2} \times
    \frac{ t_s + C_{s,a} \left(1-\eta_d +\eta_d^2 |\beta|^2\right) }
    {t_s + r_s \tanh^2{g_a}}} \nonumber 
\end{align}

\noindent
\textbf{II) Joint probabilities:}
The joint probability distribution corresponding to the state in Eq.~\eqref{rho_out_2ph} is obtained from
$\mathrm{Tr}\big[\rho_{out}^{(2ph)} \big(\mathcal{M}_0^{(A)} \otimes\mathcal{M}_0^{(B)}\big) \big]$. 
For a symmetric BS at Charlie’s ($t_c=r_c=0.50$), this probability is calculated as:

    \begin{align}
    P_\text{2ph}(00|\alpha,\beta) & = \frac{1}{2}{(\mathcal{N N}_b)}^{-2} \sqrt{\eta_t}
    \sum\nolimits_{n=0}^\infty \lambda^{(b)}_n \big (1-(1-\eta_s)^n \big)
    \nonumber\\
    & \times \Bigg[ \sum\limits_{m=0}^\infty \lambda^{(a)}_m r_t^m \bra{m}\mathcal{M}_0^{(A)}\ket{m}
    .\frac{\partial}{\partial r_t} \Big(\sum\limits_{l=0}^{n} \binom{n}{l} r_s^{l} (r_t t_s)^{n-l} \bra{l}\mathcal{M}_0^{B}\ket{l}\Big)
    \nonumber\\
    & + \frac{\partial}{\partial r_t} \Big(\sum\limits_{m=0}^\infty \lambda^{(a)}_m r_t^m \bra{m}\mathcal{M}_0^{(A)}\ket{m}\Big)
    . \sum\limits_{l=0}^{n} \binom{n}{l} r_s^{l} (r_t t_s)^{n-l} \bra{l}\mathcal{M}_0^{B}\ket{l}
    \nonumber\\
    & + 2 \sqrt{\frac{t_s}{r_s}}\text{Re}\Big(e^{i\phi } 
    \sum\limits_{m=0}^\infty \sqrt{\lambda^{(a)}_m \lambda^{(a)}_{m+1} (m+1)} \ r_t^m \bra{m+1}\mathcal{M}_0^{(A)}\ket{m}
    . \sum_{l=1}^n \sqrt{l}\binom{n}{l} r_s^l (t_s r_t)^{n-l} \bra{l}\mathcal{M}_0^{(B)}\ket{l-1}\Big)\Bigg]
\end{align}
which consists of three terms: the first two can be obtained from Eqs.~\eqref{Marg_a_simp} and \eqref{Marg_b_simp}, while for the last term we have
\begin{align}
    \sum\limits_{m=0}^\infty \sqrt{\lambda^{(a)}_m \lambda^{(a)}_{m+1} (m+1)} \ r_t^m \bra{m+1}\mathcal{M}_0^{(A)}\ket{m} \!
    & \overset{\ref{Local Measurement Matrix}}{=} 
    \frac{\eta_d \alpha \tanh{g_a}}{\cosh^2{g_a}} e^{-\eta_d|\alpha|^2}
    \sum_{i=0}^\infty \frac{(\frac{1-\eta_d}{\eta_d^2|\alpha|^2})^i}{i!} 
    \sum_{m=i}^\infty \frac{(m+1)! (C_{t,a}\eta_d^2|\alpha|^2)^m}{(m+1-i)!(m-i)!}
    \nonumber\\
    & \! \overset{(a)}{=} \!
    \frac{\eta_d \alpha \tanh{g_a}}{\cosh^2{g_a}} e^{-\eta_d|\alpha|^2} e^{C_{t,a}\eta_d^2|\alpha|^2}
    \sum_{i=0}^\infty \frac{(\frac{1-\eta_d}{\eta_d^2|\alpha|^2})^i}{i!} 
    \sum_{j=0}^{i} \frac{i!(i+1)!(C_{t,a}\eta_d^2|\alpha|^2)^{i+j}}{j!(j+1)!(i-j)!} 
    \nonumber\\
    & \! = \frac{\eta_d \alpha \tanh{g_a}}{\cosh^2{g_a}} e^{(C_{t,a}\eta_d^2-\eta_d)|\alpha|^2}
    \sum_{j=0}^\infty \frac{(C_{t,a}\eta_d^2|\alpha|^2)^j}{j!}
    \sum_{i=j}^\infty \binom{i+1}{j+1} (C_{t,a}(1-\eta_d))^ i
    \nonumber\\
    & \! \overset{(c)}{=} \frac{\eta_d \alpha \tanh{g_a}}{\cosh^2{g_a}} e^{(C_{t,a}\eta_d^2-\eta_d)|\alpha|^2}
    \sum_{j=0}^\infty \frac{(C_{t,a}\eta_d^2|\alpha|^2)^j}{j!}
    \frac{\big(C_{t,a}(1-\eta_d)\big)^{j}}{\big(1-C_{t,a}(1-\eta_d)\big)^{j+2}}
    \nonumber\\
    & \! = \frac{\eta_d \alpha \tanh{g_a} \cosh^{-2}{g_a}}{\big(1-C_{t,a}(1-\eta_d)\big)^{2}}
    \exp{\Big(-\frac{\eta_d (1-C_{t,a})|\alpha|^2}{1-C_{t,a}(1-\eta_d)}\Big)},
\end{align}
\begin{align}
    \sum\limits_{n=0}^\infty \lambda^{(b)}_n & \big (1-(1-\eta_s)^n \big)
    \sum_{l=1}^n \sqrt{l}\binom{n}{l} r_s^l (t_s r_t)^{n-l} \bra{l}\mathcal{M}_0^{(B)}\ket{l-1}\\
    & \overset{\ref{Local Measurement Matrix}}{=}
    \frac{e^{-\eta_d |\beta|^2}}{\eta_d\beta^*\cosh^{2}{g_b}} 
    \sum\limits_{k=0}^{\infty} \frac{(\frac{1-\eta_d}{\eta_d^2 |\beta|^2})^{k}}{k!} 
    \sum\limits_{l=k+1}^{\infty} \frac{l! \,(\frac{r_s \eta_d^2 |\beta|^2}{t_s r_t})^l}{(l-k)!(l-k-1)!}
    \sum\limits_{n=l}^\infty \binom{n}{l} (C_{t,b} t_s)^{n} \big (1-(1-\eta_s)^n \big) \nonumber\\
    & \overset{(c)}{=} \frac{e^{-\eta_d |\beta|^2}}{\eta_d\beta^*\cosh^{2}{g_b}} 
    \sum\limits_{k=0}^{\infty} \frac{(\frac{1-\eta_d}{\eta_d^2 |\beta|^2})^{k}}{k!} 
    \sum\limits_{l=k}^{\infty} \frac{l! \, (\frac{r_s \eta_d^2 |\beta|^2}{t_s r_t})^l}{(l-k)!(l-k-1)!}
    \left[ \frac{(C_{t,b} t_s)^{l}}{(1-C_{t,b} t_s)^{l+1}}
    - \frac{\big(C_{t,b} t_s(1-\eta_s)\big)^{l}}{\big(1-C_{t,b} t_s(1-\eta_s)\big)^{l+1}} \right]\nonumber\\
    & \overset{(a)}{=} \frac{\eta_d\beta \, e^{-\eta_d |\beta|^2}}{(1-\eta_d)\cosh^{2}{g_b}} 
    \Bigg[ \frac{e^{\frac{C_{s,b} \ \eta_d^2 |\beta|^2}{1-C_{t,b} t_s}}}{1-C_{t,b} t_s} 
    \sum_{i=0}^\infty \frac{\Big(\frac{C_{s,b} \ \eta_d^2 |\beta|^2}{1-C_{t,b} t_s}\Big)^i}{i!}
    \sum\limits_{k=i}^{\infty} \binom{k+1}{i+1} \Big(\frac{C_{s,b} (1-\eta_d)}{1-C_{t,b} t_s}\Big)^{k+1} 
    \nonumber\\
    & \hspace{3cm} - \frac{e^{\frac{C_{s,b} \ \eta_d^2 |\beta|^2(1-\eta_s)}{1-C_{t,b} t_s(1-\eta_s)}}}{1-C_{t,b} t_s(1-\eta_s)}
    \sum_{i=0}^\infty \frac{\Big(\frac{C_{s,b} \ \eta_d^2 |\beta|^2 (1-\eta_s)}{1-C_{t,b} t_s (1-\eta_s)}\Big)^i}{i!}
    \sum\limits_{k=i}^{\infty} \binom{k+1}{i+1} \Big(\frac{C_{s,b} (1-\eta_d) (1-\eta_s)}{1-C_{t,b} t_s (1-\eta_s)}\Big)^{k+1} \Bigg] 
    \nonumber\\
    & \overset{(c)}{=} \frac{C_{s,b}\eta_d\beta \, e^{-\eta_d |\beta|^2}}{\cosh^{2}{g_b}} 
    \Bigg[ \frac{e^{\frac{C_{s,b} \ \eta_d^2 |\beta|^2}{1-C_{t,b} t_s}}}{(1-C_{t,b} t_s - C_{s,b} (1-\eta_d))^2} 
    \sum_{i=0}^\infty \frac{\Big(\frac{C_{s,b}^2 (1-\eta_d)\eta_d^2 |\beta|^2}{(1-C_{t,b} t_s) (1-C_{t,b} t_s - C_{s,b} (1-\eta_d))}\Big)^i}{i!}
    \nonumber\\
    & \hspace{2.8cm} - \frac{e^{\frac{C_{s,b} \ \eta_d^2 |\beta|^2(1-\eta_s)}{1-C_{t,b} t_s(1-\eta_s)}} \ (1-\eta_s)}
    {\big(1-(C_{t,b} t_s+C_{s,b} (1-\eta_d)) (1-\eta_s)\big)^2}
    \sum_{i=0}^\infty \frac{\Big(\frac{C_{s,b}^2 (1-\eta_d) \eta_d^2 |\beta|^2 (1-\eta_s)^2}{(1-C_{t,b} t_s (1-\eta_s)) (1-(C_{t,b} t_s+C_{s,b} (1-\eta_d)) (1-\eta_s))}\Big)^i}{i!}\Bigg] 
    \nonumber\\
    & = \frac{C_{s,b}\eta_d\beta \, e^{-\eta_d |\beta|^2}}{\cosh^{2}{g_b}} 
    \Bigg[ \frac{\exp{\Big(\frac{C_{s,b} \ \eta_d^2 |\beta|^2}{1-C_{t,b} t_s - C_{s,b} (1-\eta_d)}\Big)}}
    {\big(1-C_{t,b} t_s - C_{s,b} (1-\eta_d)\big)^2}
    - \frac{\exp{\Big(\frac{C_{s,b} \ \eta_d^2 |\beta|^2 }{ 1-(C_{t,b} t_s+C_{s,b} (1-\eta_d)) (1-\eta_s)}\Big)}}
    {\big(1-(C_{t,b} t_s+C_{s,b} (1-\eta_d)) (1-\eta_s)\big)^2}\Bigg].
\end{align}

\noindent
For an SPPE initial state at Bob's station ($g_b\ll 1$), the joint probability takes the form
\begin{center}
    \fbox{%
    \parbox{0.9\textwidth}{%
    \begin{align}
        & \tilde{P}_\text{2ph}(00|\alpha,\beta) = 
        \frac{\exp{\Bigl(-\frac{(1-C)\eta_d|\alpha|^2}{1-C(1-\eta_d)}-\eta_d|\beta|^2 \Bigr)}
        \left({1-C}\right)^2}
        {(t_s + r_s \tanh^2{g})\ (1-C(1-\eta_d))^2}
        \times \nonumber\\
        &\Biggl[
        t_s\Bigl(1+ \frac{C\eta_d^2 |\alpha|^2}{1-C(1-\eta_d)} \Bigr) +
        r_s \tanh^2{g} \ \Bigl(1-\eta_d+ \frac{\eta_d^2 |\alpha|^2}{1-C(1-\eta_d)}\Bigr) 
        \bigl(1-\eta_d+\eta_d^2 |\beta|^2\bigr)
        + 2 \eta_d^2 \tanh{g} \sqrt{t_s r_s} Im(\alpha \beta )
        \Biggr] \nonumber
    \end{align}
    }%
    }
\end{center}

\section{Visibility between two SPDCs}
\label{vis-sec}

In this section, we analyze the effect of visibility between two SPDCs outputs on the CHSH value and the secure key rate (SKR) of the protocols.
Let $\hat{a}^\dagger_\omega$ be the creation operator of a photon with frequency $\omega$, satisfying the continuous-mode commutation relation $[\hat{a}_\omega, \hat{a}_{\omega'}^\dagger] = \delta(\omega - \omega')$.
Considering the spectral modes, the output of a SPDC generates spectrally correlated photons in the following form
\begin{align}
    \ket{\psi}_{\mathrm{SPDC}} = \sqrt{\lambda_{0}} \ket{00} + \sqrt{\lambda_{1}} \int \mathrm{d}\omega_{i} \mathrm{d}\omega_{s} F(\omega_{i}, \omega_{s}) a^{\dagger}_{\omega_{i}} a^{\dagger}_{\omega{s}} \ket{00} + \cdots,
\end{align}
where
$F(\omega_{i}, \omega_{s}) \propto \exp[-\frac{(\omega_{i} + \omega_{s} - 2\mu)^2}{2\sigma_{p}}] \mathrm{sinc}[\frac{L \Delta k}{2}]$ is the joint spectral amplitudes (JSA) \cite{branczyk2017hong} of the outputs, $\mathrm{sinc}(x) = \sin{x}/x$ and $L$ is the crystal length. In this relation, $\Delta k = k_{i}(\omega_{i}) + k_{s}(\omega_{s}) - k_{p}(\omega_{i} + \omega_{s})$ and $k$ is the wave number, $\sigma_{p}$ is bandwidth and $\mu$ is variance of pump pulse. 
Under specific conditions of $\frac{2}{\sigma^2} + \gamma L^2 (k'_{s} - k'_{p}) (k'_{i} - k'_{p}) = 0$ \cite{branczyk2017hong}, which occurs for SPDC crystals of a certain length, JSA functions become separable, and the state produced by SPDC can be written as
\begin{equation}
    \ket{\psi}_{SPDC} = 
    \sum_{n=0}^\infty \sqrt{\lambda_n}
    \Big[\frac{1}{\sqrt{n!}} \mathop{\scriptstyle \bigotimes}_{i=1}^{n} \int d\omega_{i} f_1(\omega_{i}) a_{\omega_{i}}^{\dagger} \ket{0}\Big]_1
    \otimes
    \Big[\frac{1}{\sqrt{n!}} \mathop{\scriptstyle \bigotimes}_{i=1}^{n}\int d\omega_{i} f_2(\omega_{i}) a_{\omega_{i}}^{\dagger} \ket{0}\Big]_2
    = \sum_{n=0}^\infty \sqrt{\lambda_n} \ket{n}_f \otimes \ket{n}_f
\end{equation}
where we defined the corresponding $n$-photon wavepacket state  characterized by $f(\omega)$ as $\ket{n}_f := \frac{(\hat{A}_f^\dagger)^n}{\sqrt{n!}} \ket{0}$ with the creation operator for a single photon in a wavepacket given by $\hat{A}_f^\dagger = \int d\omega\, f(\omega)\, \hat{a}^\dagger_\omega$.

\subsection{The 1-photon protocol}
We consider two SPDC sources at Alice’s and Bob’s stations, producing photons with separable JSA functions $f(\omega)$ and $g(\omega)$, respectively.
To evaluate Hong--Ou--Mandel (HOM) visibility at Charlie's, defined as $V:=|v|^2$, with visibility amplitude $v:=\int f^*(\omega) g(\omega)\mathrm{d}\omega$, the beams must interfere in a HOM interferometer~\cite{branczyk2017hong}. 
The projector for a detector that perfectly distinguishes single-photon events without revealing any frequency information about the photon is given by~\cite{branczyk2010optimized}:
\begin{equation}
    \Pi = \int \mathrm{d}\omega a^{\dagger}_{\omega} \ketbra{0}{0} a_{\omega}.
\end{equation}
Accordingly, Charlie's POVM, incorporating his 50:50 beam splitter, can be written as
\begin{equation}
    \Pi_{\text{BS}}^{(c)}
    = \frac{1}{2} \int_{-\infty}^{\infty} d\omega
    (a_{\omega}^{\dagger} - i b_{\omega}^{\dagger}) \ketbra{00}{00} (a_{\omega} + i b_{\omega}) 
    = \frac{1}{2} \int_{-\infty}^{\infty} d\omega
    (\ket{1_{\omega}0} -i\ket{01_{\omega}})(\bra{1_{\omega}0} +i\bra{01_{\omega}}),
\end{equation}
with $a_\omega (b_\omega)$ and $a^\dagger_\omega (b^\dagger_\omega)$ are the annihilation and creation operators, respectively, acting on the idler mode received from Alice (Bob).
To find the impact of the visibility of two beams arriving at Charlie's station on the final state shared between Alice and Bob, we begin by writing the state generated by their SPDC sources:
\begin{align}
\rho_{\text{in}} =  
\sum_{n,n',m,m' = 0}^{\infty} \sqrt{\lambda_{n} \lambda_{n'} \lambda_{m} \lambda_{m'}} &
\sum_{p=0}^{\min(n,n')}\sum_{q=0}^{\min(m,m')} \sqrt{{ n\choose p}{ n'\choose p} {m\choose q}{m'\choose q} (1-r)^{n+n'+m+m'-2(p+q)} r^{2(p+q)}} \nonumber \\
& \times \Big[\ketbra{n}{n'}_f \otimes \ketbra{m}{m'}_g \Big]_{a_1,b_1}
\times \Big[ \ketbra{n-p}{n'-p}_f \otimes \ketbra{m-q}{m'-q}_g \Big]_{a_2,b_2}.
\end{align}
where transmission losses are modeled by a BS with reflectivity $r$, and $p$ and $q$ indicate the number of photons lost in the idler beams. 
For simplicity, we focus on the idler beams and rewrite the state as below:
\begin{align}
    \rho_\text{in} = \sum_{n,n',m,m',p,q} \mathcal{F}^{(a_1,b_1)}_{m,n,m',n',p,q} ~
    \times \Big[ \ketbra{n-p}{n'-p}_f \otimes \ketbra{m-q}{m'-q}_g \Big]_{a_2,b_2}
\end{align}
with $\mathcal{F}^{(a_1,b_1)}_{m,n,m',n',p,q}$ containing all the coefficients and terms associated with the signal beams. 
To calculate the shared density matrix after Charlie's measurement, we need to apply his POVM on this state:
\begin{align}
    \rho_{\text{out}} & 
    = \mathrm{Tr}_{c_{1}c_{2}} \big(  \Pi_\text{BS}^{(c)} \ \rho_{in} \big) \nonumber \\
    & = \frac{1}{2} \sum_{n,n',m,m',p,q} \mathcal{F}^{(a_1,b_1)}_{m,n,m',n',p,q} ~ 
    \int d\omega \, \bra{00}
    (\alpha_{\omega} + i \beta_{\omega})  \ket{n-p}_f\ket{m-q}_g 
    \times \bra{n'-p}_f\bra{m'-q}_g (\alpha_{\omega'}^{\dagger} - i \beta_{\omega'}^{\dagger})  \ket{00} \nonumber\\
    & = \frac{1}{2} \sum_{n,n',m,m',p,q} \mathcal{F}^{(a_1,b_1)}_{m,n,m',n',p,q} ~ \int d\omega \Big( (f(\omega) \delta_{n,p+1} \delta_{m,q} + i g(\omega) \delta_{n,p} \delta_{m,q+1})(f^{*}(\omega) \delta_{n',p+1} \delta_{m',q} - i g^{*}(\omega) \delta_{n',p} \delta_{m',q+1} \Big) \nonumber \\
    &= \frac{1}{2} \sum_{n,n',m,m',p,q} \mathcal{F}^{(a_1,b_1)}_{m,n,m',n',p,q} ~ \Big( \delta_{n, n', p+1} \delta_{m, m', q} + \delta_{n, n', p} \delta_{m, m', q+1} + i \, v^{*} \delta_{n, n'-1, p} \delta_{m-1, m',q} - i \, v \, \delta_{n-1, n', p} \delta_{m, m'-1, q} \Big).
\end{align}
Applying the $\delta$-functions to  $\mathcal{F}^{(a_1,b_1)}_{m,n,m',n',p,q}$, reproduce the density matrix given in Eq.~\eqref{rho_out_1ph}, where $v$ appears exclusively in the off-diagonal components.  
Consequently, the visibility amplitude $v$ enters into the joint probabilities of Alice and Bob as follows, while leaving the marginal distributions unaffected:
\begin{align}
    {{P}^v_{1ph}(00|\alpha,\beta) 
    = \frac{(1 - C)^3}{2} \left( 
    \frac{2(1 - \eta)(1 - C(1 - \eta)) + \eta^2 \big(|\alpha|^2 + |\beta|^2 + 2\text{Im}(v\ \alpha\beta^*)\big)}{ (1-C(1-\eta))^4} \right)
    \exp\left( - \frac{ (1 - C)\eta (|\alpha|^2 + |\beta|^2)}{1-C(1-\eta)} \right).} \nonumber
\end{align}

\subsection{The 2-photon protocol}

For the 2-photon protocol, a similar analysis reveals that visibility affects the state in Eq.~\eqref{rho_out_2ph} in the same way, that is, $v$ appears exclusively in off-diagonal terms and consequently modifies only the joint probabilities as follows:
\begin{align}
    & \tilde{P}^v_{2ph}(00|\alpha,\beta) = 
    \frac{\exp{\Bigl(-\frac{(1-C)\eta|\alpha|^2}{1-C(1-\eta)}-\eta|\beta|^2 \Bigr)}
    \left({1-C}\right)^2}
    {(t_s + r_s \tanh^2{g})\ (1-C(1-\eta))^2}
    \times \nonumber\\
    &\Biggl[
    t_s\Bigl(1+ \frac{C\eta^2 |\alpha|^2}{1-C(1-\eta)} \Bigr) +
    r_s \tanh^2{g} \ \Bigl(1-\eta+ \frac{\eta^2 |\alpha|^2}{1-C(1-\eta)}\Bigr) 
    \bigl(1-\eta+\eta^2 |\beta|^2\bigr)
    + 2 \eta^2 \tanh{g} \sqrt{t_s r_s} Im(v \ \alpha \beta^{*} )
    \Biggr]. \nonumber
\end{align}

\section{Local Visibility between state and measurement}
\label{local-vis}

Here, we study the impact of local visibility between the photons in the quantum state and those involved in the measurement. 
Compared to our previous calculation, the only modification appears in the effective local measurement operators $\mathcal{M}^{(A)}_0$ and $\mathcal{M}^{(B)}_0$, and the shared state between Alice and Bob remains unchanged.
The marginals and joint probability distributions can be calculated using 
\begin{equation}\label{marg_vis}
        P(0|\delta)=Tr(\rho^{(x)}_{out} \, \mathcal{M}_0)=\sum_{n,n'=0}^\infty (\rho^{(x)}_{out})_{n'n} \bra{n}_{f_x}\mathcal{M}_0\ket{n'}_{f_x},
\end{equation}
\begin{equation}\label{joint_vis}
    P(00|\alpha\beta)
    = \mathrm{Tr}\!\left[\rho^{(ab)}_{\text{out}}
    \big(\mathcal{M}_0^{(A)} \otimes \mathcal{M}_0^{(B)}\big)\right]
    = \sum_{\substack{n,n'=0\\ m,m'=0}}^{\infty}
    \left(\rho^{(ab)}_{\text{out}}\right)_{\substack{n'm'\\ n\, m}}
    \, \bra{n}_{f_a} \mathcal{M}^{(A)}_0 \ket{n'}_{f_a}
    \, \bra{m}_{f_b} \mathcal{M}^{(B)}_0 \ket{m'}_{f_b} .
\end{equation}
where $\mathcal{M}_0 := D_g(\delta) E_{0}^{\eta_d} D^{\dagger}_g(\delta)$ consists of the no-click POVM $E_0^{\eta_d}=\sum_{m=0}^\infty (1-\eta_d)^m \, \hat{\Pi}_m$ in the presence of a displacement operator ${D}_g(\alpha)$, defined in a mode with normalized spectral amplitude $g(\omega)$.
We now compute the matrix elements of $\mathcal{M}_0$ in the Fock state basis $\ket{n}_f$, characterized by the normalized spectral amplitude $f(\omega)$.

The displacement operator ${D}_g (\alpha)$ acts on a specific mode with spectral amplitude $g(\omega)$, can be expressed as
\begin{equation}
{D}_g(\alpha) = \exp(\alpha \hat{A}_g^\dagger - \alpha^* \hat{A}_g)
\end{equation}

By utilizing the fact that any two normalized functions $f(\omega)$ and $g(\omega)$ in the Hilbert space $L^2(\mathbb{R})$ span a two-dimensional subspace, We decompose the function $f$ into a component parallel to the displacement mode $g$ and an orthogonal "dark" component $g^\perp$ via the Gram-Schmidt process:
\begin{equation}
f(\omega) = v \, g(\omega) + u \, g^\perp(\omega)
\Rightarrow 
\hat{A}_f^\dagger = v^* \hat{A}_g^\dagger + u  \hat{A}_{g^\perp}^\dagger
\end{equation}
where $v = \int f^*(\omega) g(\omega) d\omega$ is the spectral overlap integral and $|u|=\sqrt{1-|v|^2}$. Using this decomposition, the initial state $\ket{n}_f$ can be expanded in terms of the number of photons in these two orthogonal modes:
\begin{equation}
\ket{n}_f = \frac{(\hat{A}_f^\dagger)^n}{\sqrt{n!}} \ket{0} 
= \frac{(v \hat{A}_g^\dagger + u  \hat{A}_{g^\perp}^\dagger)^n}{\sqrt{n!}} \ket{0} 
= \sum_{p=0}^n \sqrt{\binom{n}{p}} v^{n-p} u^{p} \ket{n-p}_g \ket{p}_{g^\perp}.
\end{equation}
The displacement operator ${D}_g^\dagger(\alpha)$ acts exclusively on the $g$ mode, leaving the orthogonal mode $g^\perp$ unchanged:
\begin{equation} \label{D_g n_f}
    {D}_g^\dagger(\alpha)\ket{n}_f 
    = \sum_{p=0}^n \sqrt{\binom{n}{p}} v^{n-p} u^{p} \left( {D}_g^\dagger(\alpha)\ket{n-p}_g \right) \ket{p}_{g^\perp}.
\end{equation}

For a mode-blind detector, which clicks for any photon regardless of its frequency, the detection of $m$ photons mathematically defined as a projection operator $\hat{\Pi}_m$ onto the subspace where the sum of photons across all modes equals $m$:
\begin{equation}\label{projector to g}
\hat{\Pi}_m = \sum_{j=0}^m \ket{m-j}_g \ket{j}_{g^\perp} \bra{m-j}_g \bra{j}_{g^\perp}.
\end{equation}
Combining \ref{D_g n_f} and \ref{projector to g} yields us the matrix elements of $\mathcal{M}_0$ as
\begin{align}
    \bra{n}_f \mathcal{M}_0\ket{n'}_f 
    = \sum_{m=0}^{\infty} (1-\eta_d)^m 
    & \bra{n}_f D_g(\delta) \, \hat{\Pi}_m \, D_g^{\dagger}(\delta) \ket{n'}_f \nonumber\\
    =  \sum_{m=0}^{\infty} (1-\eta_d)^m
    & \Bigg[\sum_{p=0}^n \sqrt{\binom{n}{p}} (v^*)^{n-p} (u^*)^{p} \left( \bra{n-p}_g {D}_g(\alpha)\right) \bra{p}_{g^\perp}\Bigg] \nonumber \\
    \times & \Bigg[\sum_{j=0}^m \ket{m-j}_g \ket{j}_{g^\perp} \bra{m-j}_g \bra{j}_{g^\perp}\Bigg] \Bigg[\sum_{p'=0}^{n'} \sqrt{\binom{n'}{p'}} v^{n'-p'} u^{p'} \left( {D}_g^\dagger(\alpha)\ket{n'-p'}_g \right) \ket{p'}_{g^\perp}\Bigg] \nonumber\\
    = \sum_{m=0}^{\infty} (1-\eta_d)^m 
    & \hspace{-2mm}\sum_{j=0}^{min(m,n,n')} \hspace{-2mm}\sqrt{\binom{n}{j}\binom{n'}{j}} (v^*)^{n-j} v^{n'-j} |u|^{2j} \bra{n-j}_g {D}_g(\alpha)\ket{m-j}_g \bra{m-j}_g {D}^\dagger_g(\alpha)\ket{n'-j}_g,
\end{align}
where in the last line we used the identities $\braketm{p}{j}_{g^\perp}=\delta_{p,j}$ and $\braketm{p'}{j}_{g^\perp}=\delta_{p',j}$.
The single-mode matrix element $\bra{n-j} {D}(\alpha) \ket{m-j}$ can be evaluated using the normal-ordered expansion of the displacement operator, as derived in \ref{Displacement}:
\begin{equation}
    \bra{n-j}_g {D}_g(\alpha) \ket{m-j}_g 
    = e^{-\frac{|\alpha|^2}{2}} \sum_{k=j}^{\min(n,m)} \frac{\sqrt{(n-j)! (m-j)!} \ \alpha^{n-k} (-\alpha^*)^{m-k}}{(n-k)! (m-k)! (k-j)!}.
\end{equation}
By performing similar calculation as in \ref{Local Measurement Matrix}, we get
\begin{equation}\label{Local Measurement Matrix - vis}
    \bra{n}_f \mathcal{M}_0\ket{n'}_f 
    = e^{-\eta_d |\delta|^2} \sqrt{n! n'!} \hspace{2mm} \eta_d^{n+n'} (v^*\delta)^n (v\,\delta^*)^{n'} 
    \sum_{j=0}^{\min(n,n')} \frac{|\frac{u}{v}|^{2j}}{j!}
    \sum_{i=j}^{\min(n,n')} \frac{(\frac{1-\eta_d}{\eta_d^2 |\delta|^2})^{i}}{i!(n-i)!(n'-i)!}.
\end{equation}
In the limit of perfect matching ($V=1$), the only non-zero term corresponds to $j=0$, for which the above expression reduces to the \ref{Local Measurement Matrix}. 
To find the probability distributions, one needs to substitute \ref{Local Measurement Matrix - vis} into \ref{marg_vis} and \ref{joint_vis}.

Fig.~\ref{fig:vis-det} illustrates how the required visibility threshold varies with detection efficiency for the two protocols. Specifically, panel (a) shows the required visibility to violate the CHSH inequality at each detection efficiency, while panel (b) shows the thresholds needed to achieve a positive SKR for both the 1-photon and 2-photon cases.

\begin{figure}[H]\centering 
    \includegraphics[width=16cm]{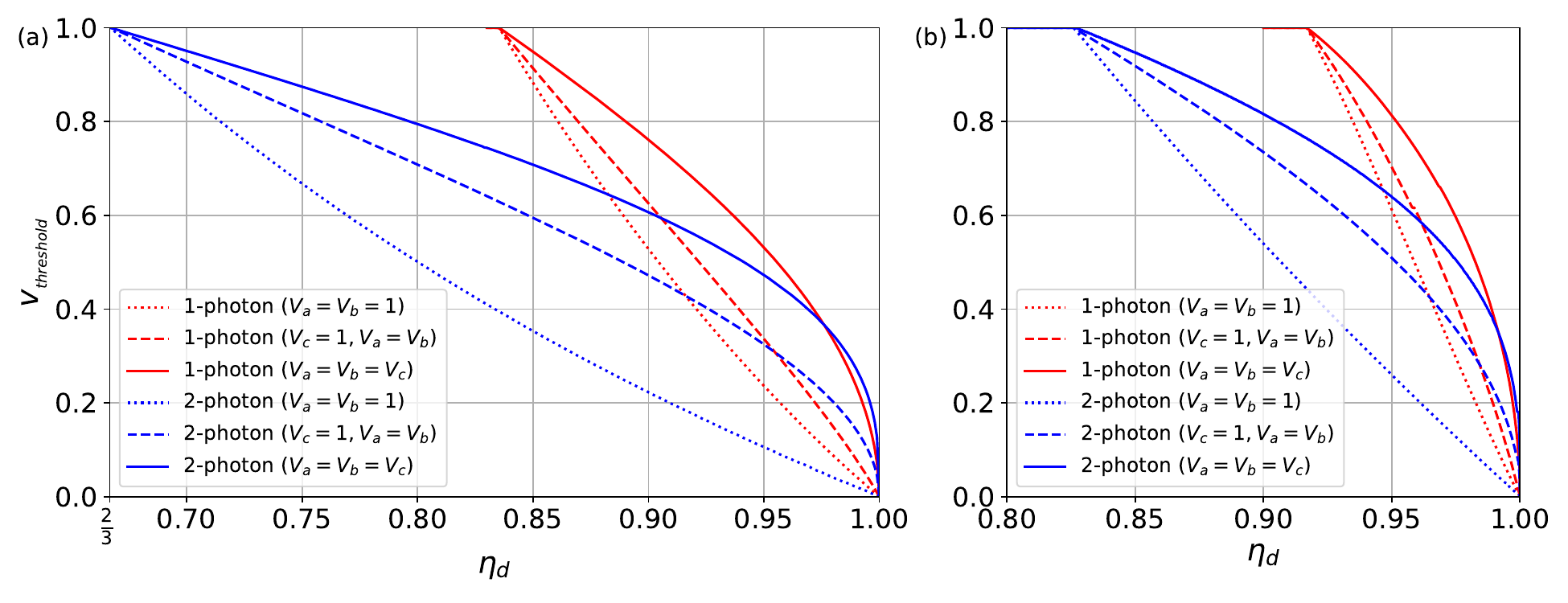}
    \caption{Required visibility (a) to violate the CHSH inequality and (b) to achieve a positive SKR as a function of detection efficiency, shown for the 1-photon (red) and 2-photon (blue) protocols. Solid lines correspond to imperfect visibility in all labs ($V_a=V_b=V_c<1$), while the dashed line assumed the perfect visibility in Charlie's lab ($V_c=1$), and in the dotted line correspond to perfect visibility in the local labs ($V_a=V_b=1$)}
    \label{fig:vis-det}
\end{figure}

\section{Lower bounds on the key rate}
\label{LB_Rate-sec}

The asymptotic key rate is determined using the Devetak–Winter formula, $r_{DW}=H(A_1|E)-H(A_1|B_3)$, where the first term quantifies Eve’s uncertainty about Alice’s outcomes, and the second term represents the information leaked during error correction. In our analysis, we employ three distinct computational techniques to obtain a rigorous lower bound on it, and complement these with two post-processing strategies designed to enhance the final value:

\subsection{Analytical approach: CHSH-based method}
The asymptotic secure key rate can be lower-bounded analytically using 
\begin{equation} \label{Analytical_LB}
    r \geq 1 - h\left(\tfrac{1+\sqrt{(S/2)^2-1}}{2}\right)- H(A_1|B_3)
    + h\left(\tfrac{1+\sqrt{1-q(1-q)(8-S^2)}}{2}\right),
\end{equation}
where $q$ is the probability that Alice flips her output, a procedure known as noisy preprocessing. 
Using this lower bound, Fig.~\ref {fig:2ph-Eta84to90}(a) illustrates the key-rate dependence on distance for the two-photon protocol with perfect visibility $V=1$, optimized over $g_a$ for various values of efficiencies, assuming $\eta_s=1$ and $g_b=0.1$.
The results are obtained using the probability distribution from Sec.~\ref{App:Imperfect-lossy-2ph} and the heralding efficiency from Eq.~\eqref{eq:2ph heralding proba}.
The corresponding optimal values of $g$ for each distance and efficiency are shown in Fig.~\ref{fig:2ph-Eta84to90}(b).

\begin{figure}[h]\centering
    \includegraphics[width=\columnwidth]{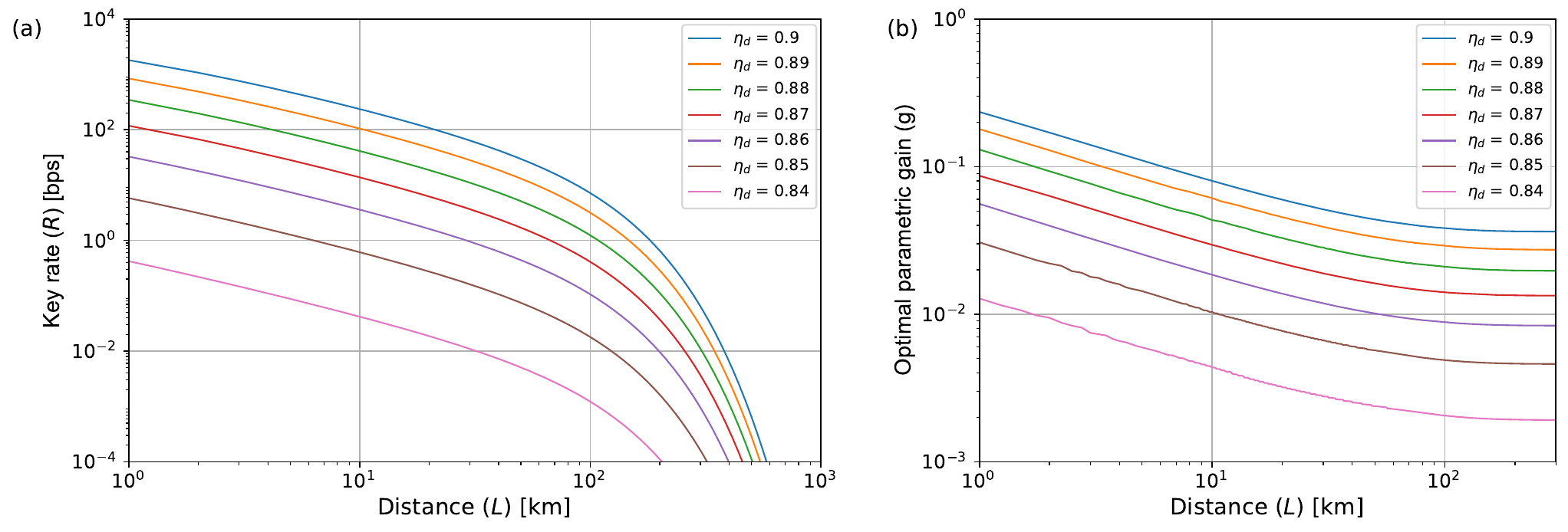}
    \caption{(a) Asymptotic secure key rate (bits per second, bps) versus distance for the 2-photon protocol using the optimal parametric gain $g_a$ with visibility $V=1$, assuming laser repetition rate $f_{\text{rep}}=100$ MHz, $\eta_s=1$, and $g_b=0.1$, using the bound from Eq.~\eqref{SKR-CHSH+NP}.
    (b) Optimal value of $g_a$ as a function of distance for the 2-photon protocol with visibility $V=1$.
    Lines correspond to detection efficiencies from 84\% to 90\%.}
    \label{fig:2ph-Eta84to90}
\end{figure}

\subsection{Numerical approach: Brown--Fawzi--Fawzi (BFF) method}
\label{BFF-sec}

The results of~\cite{Brown2024} enable the construction of a numerical upper bound on the relative entropy, approximated by a polynomial. Using this, a lower bound on the conditional von Neumann entropy of the state $\rho$ is established based on its connection with the relative entropy $D(\rho \Vert \sigma) = \mathrm{Tr}[\rho(\log \rho - \log \sigma)]$ in the following way
\begin{equation}
    H(A \vert B)_{\rho} := -D(\rho_{AB} \Vert \mathbb{I}_{A}\otimes \rho_{B}).
\end{equation}

Subsequently, NPA hierarchy is employed to transform the optimization problem of this non-commutative polynomial into a SDP, and the lower bound for the conditional entropy of Alice and Eve is calculated with the following program

\begin{align}    \label{Brown-eq}
    H(A|X=x^*, \mathcal{H}_{E}) &\ge \sum_{i=1}^{m-1} \frac{\omega_i}{t_i \ln2} + \inf \sum_{i=1}^{m-1} \frac{\omega_i}{t_i \ln2} \sum_{a} \bra{\psi} M_{a|x^*} \left( (Z_{a, i} + Z_{a, i}^{*} +(1-t_{i}) Z_{a, i}^{*} Z_{a, i} \right) + t_{i} Z_{a, i}Z_{a, i}^{*} \ket{\psi} , \\
    s.t. ~ \sum_{abxy} c_{abxyj} & \bra{\psi} M_{a|x} N_{b|y} \ket{\psi} \geq v_{j}, \quad \text{for all}~ 1 \leq j \leq r \nonumber \\
    \sum_{a} M_{a|x} &= \sum_{b} N_{b|y} = \mathcal{I}, \quad \text{for all}~ x, y \nonumber \\
    M_{a|x} &\geq 0,\quad \text{for all}~ a, x \nonumber \\
    N_{b|y} &\geq 0, \quad \text{for all}~ b, y \nonumber \\
    Z_{a, i}^{*} Z_{a, i} &\leq \alpha_{i},  \quad \text{for all} ~a, i = 1, \cdots , m-1 \nonumber \\
    Z_{a, i} Z_{a, i}^{*} &\leq \alpha_{i},  \quad \text{for all} ~a, i = 1, \cdots , m-1 \nonumber \\
    [M_{a|x},N_{b|Y}]& = [M_{a|x},Z_{b, i}^{*}] = [N_{b|y},Z_{a, i}^{*}] = 0 , \quad \text{for all}~ a, b, x, y, i \nonumber\\
    M_{a|x}, N_{b|y}, Z_{z, i} &\in B(H), \quad \text{for all} ~ a, b, x, y, i \nonumber
\end{align}
Note that considering pure states is enough for this optimization. In the above relation, $\omega_{i}$s are the weights in Gauss--Radau quadrature, $Z_{i}$s are Eve's bounded operators, and $\{\{M_{a|x}\}_{a}\}_{x}$ are preselected POVM's used by Alice. Furthermore,  $B(\mathcal{H}_{E})$ is von Neumann algebra on Eve's separable Hilbert space $\mathcal{H}_E$. For more details refer to \cite{Brown2024}. 

In our numerical calculations, we go up to the second level of NPA hierarchy plus some additional constraints of higher level ``$\mathrm{ABZ} + \mathrm{AZZ} + \mathrm{ABB} + \mathrm{AAB}$'' where $A \in \{I\} \cup \{M_{a|x}\}_{a,x}$, $B \in \{I\} \cup \{N_{b|y}\}_{b,y}$, and $Z \in \{I\} \cup \{Z_{c,i}, Z_{c,i}^* \}_{c,i}$. We do not go to further levels due to computational complexity.

To combine noisy preprocessing with this method, we should modify the measurement operators in Eq.~\eqref{Brown-eq} to $\hat{M}_{a|x} = (1-q) M_{a|x} + q M_{a|x}$. In addition, the first term in the RHS of this relation should be replaced with $c_m = 2q(1-q) + \sum_{i = 1}^{m-1} \frac{\omega_{i}}{t_i \ln{2}}$. For detailed calculations, see Ref.~\cite{Brown2024}.

\subsection{Numerical approach: Min entropy ($H_{min}$) method}
\label{gussing-sec}

Here, we present a numerical approach,  based on min-entropy as a lower bound for the conditional von Neumann entropy~\cite{Pan-PRL}, to calculate the key rate. Although this method may not yield better results compared to the Brown-Fawzi-Fawzi (BFF) method, it offers a computational advantage. Besides, its combination with the post-selection method leads to an improvement in the efficiency threshold. Post-selection is a technique that is used to overcome errors caused by detection deficiencies in which Alice and Bob randomly and independently keep bits `0' with probability $p$, discard them with probability $1-p$, and keep all bits `1' intact during their key generation rounds.

The min-entropy $H_{min}$ is tied to guessing probability $G(A_1 \vert E, \nu_p)$ in the following way
\begin{equation}
    H_{\text{min}}(A_{1}|E, \nu_{p}) = - \log_2 G(A_{1}|E, \nu_{p}),
\end{equation}
where $\nu_p = \{ab ~\vert ~ ab = 00, 01, 10, 11 \}$ is the set of postselected events with coefficients $\omega_{00}=1$, $\omega_{10}=\omega_{01}=p$, and $\omega_{11}=p^2$, and the guessing probability is defined as
\begin{equation}
    G(A_1 \vert E, \nu_p) = \frac{1}{p_{\nu_p}} \max\limits_{P(a, b, e \vert 1, 3, z)} \sum_{a,b \in \nu_p} \omega_{ab} P(a, b, a \vert 1, 3, z),
\end{equation}
with $p_{\nu_p} = \sum_{a,b\in \nu_{p}} \omega_{ab} P(a, b\vert 1, 3)$ is the total probability of keeping a pair of bits.
This optimization can be realized by means of SDP method, with the following constraints
\begin{align*}
    \sum_{e \in \{0,1\}} P(a, b, e\vert 1, 3, z) &{} = P(a, b \vert 1, 3),\\
    P(a, b, e\vert 1, 3, z) &{}\in \Tilde{Q}.
\end{align*}
The second constraint ensures that the conditional probabilities remain within the quantum set. To verify this, we utilize the NPA hierarchy \cite{Mironowicz, NPA}. 
The optimized lower bound on key rate in this case is calculated using the following modification of Eq.~\eqref{DW-ineq}
\begin{equation}
    r \ge p_{\nu_p} [H_{\text{min}}(A_1|E, \nu_{p}) - H(A_1|B_3, \nu_{p})].
\end{equation}
 The technique improves the threshold efficiency to 89.8\% in the 1-photon protocol, but compromises security by restricting it to collective attacks rather than the coherent attacks.

\subsection{Comparison}

In this section, we compare the previously described lower bound approaches and present numerical values for the parameters introduced in the manuscript. 

Fig.~\ref{fig:keyrate methods-Comparision} illustrates the maximum achievable key rate for both protocols as a function of detection efficiency.
Moreover, Table~\ref {table:det-threshold} compares the detection efficiency required to obtain a positive key rate and Bell violation for the lower bound methods discussed in this section.

\begin{figure}[H]\centering
    \includegraphics[width=\linewidth]{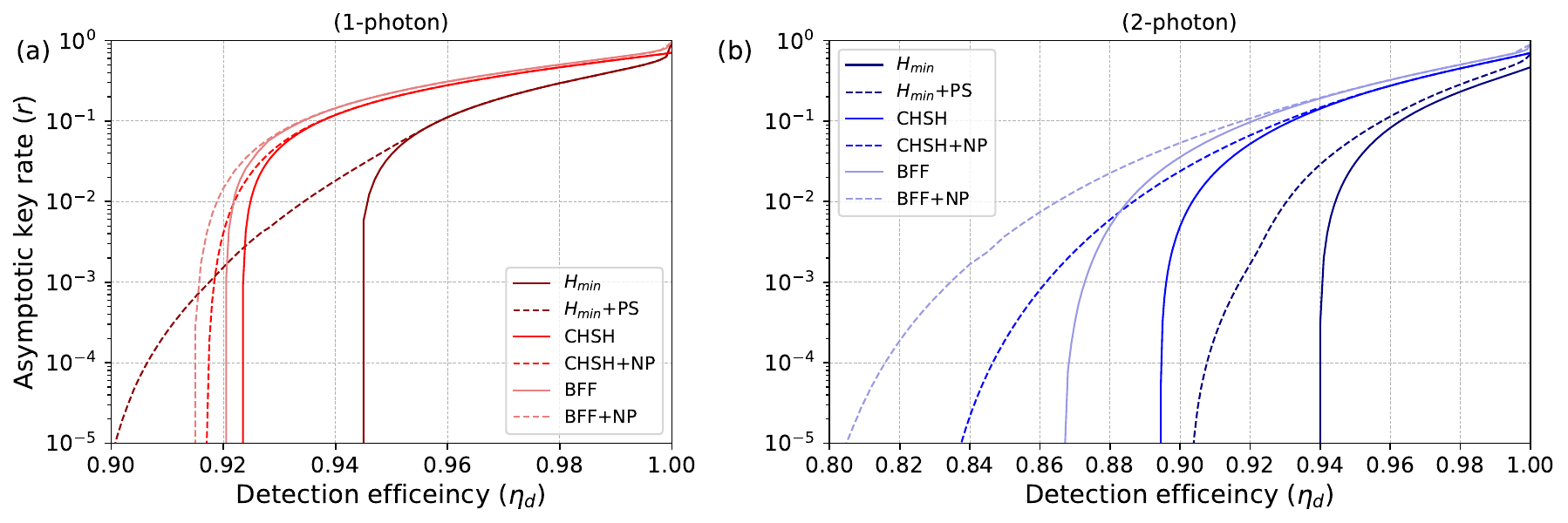}
    \caption{Secure key rate computed using CHSH, BFF, $H_{min}$ methods, without (solid) and with postprocessing (dashed) using Noisy Preprocessing (NP) and Post Selection (PS) techniques for (a) the 1-photon protocol, and (b) the 2-photon protocol.}
    \label{fig:keyrate methods-Comparision}
\end{figure}

\begin{table}[H]
	\centering
	\begin{tabular}{|c|c|c|c|}
		\hline
		Method &  related  & \multicolumn{2}{|c|}{min $\eta_{d}$} \\
		\cline{3-4}
		 & parameters  & 1-photon & 2-photon \\
		\hline
		&  Bell violation & 0.836 & 2/3  \\		 
		\cline{2-4}
		CHSH & r & 0.924 & 0.895 \\
		\cline{2-4}
		method &  r + NP & 0.917 & 0.826 \\		
		\hline
        
		\hline
		BFF & r & 0.921 & 0.868 \\
		\cline{2-4}
		method & r + NP & 0.915 & 0.802 \\
		\hline
        
		\hline
	    Min & r & 0.945 & 0.94 \\
		\cline{2-4}
		  entropy & r + PS & 0.898 & 0.903 \\		
		\hline
	\end{tabular}
	\caption{Detection efficiency threshold required to violate CHSH inequality and to achieve a positive secure key rate. For the Noisy Preprocessing (NP) technique, the optimal value of $q$ rises from 0 to 0.5 as the efficiency decreases from 1 toward the threshold, whereas for the Post Selection (PS) technique, the optimal value of $p$ drops from 1 to 0 over the same range.}
	\label{table:det-threshold}
\end{table}

\vspace{-8mm}
\section{Finite-key rate}
\label{finite-sec}
To optimize the key rate in the finite-size regime, we developed a three-tier security certification strategy:
\begin{enumerate}

\item \textit{CHSH certificate}: For each distance $L$, We optimize the Devetak--Winter bound using the analytical CHSH formula with noisy preprocessing. This allowed us to identify optimal optical parameters and bit-flip probability $q_{\text{opt}}$, from which the expected probability distribution $P(a,b\vert x,y)_L$ is obtained. Using the CHSH inequality as the Bell certificate, we then apply the BFF method to construct a min-tradeoff function and evaluate the finite-size key rate via the EAT.

\item \textit{Full statistic certificate}: Using $P(a,b|x,y)_L$, we apply the full statistics approach~\cite{Nieto-Silleras2013,Bancal2014,Brown2024}, which puts constrains on the complete probability distribution rather than only the CHSH value. The resulting min-tradeoff functions $f_L$, were then
used in conjunction with the EAT to compute finite-size key
rates. This method provides the tightest security bounds but requires monitoring of the whole measurement statistics.

\item \textit{Optimal Bell certificate}:  To overcome the real-time implementation challenges of method (2), we define a new distance-dependent Bell inequality using the min-tradeoff function $f_L$ derived from full-statistics optimization. We computed the observed value of this optimal Bell certificate for $P(a,b|x,y)_L$, and enforced this value as a constraint in a subsequent BFF procedure, with the final key rate obtained again through the EAT. This custom certificate achieves key rates that significantly exceed CHSH-based bounds while maintaining practical implementability.

\end{enumerate}
The effect of the three certification methods on the finite-key rate is illustrated in Fig.~\ref{fig:Finite_(Full,Bell,CHSH)}. The optimized Bell certificate consistently outperforms CHSH-based certification, delivering 2--3 times higher key rates at maximum distances. Full statistics provide marginal additional gains at a substantial implementation cost. Notably, numerical optimization yields only modest improvements over the analytical bound, Eq.~\eqref{SKR-CHSH+NP}, reflecting near-optimality of CHSH bounds for maximally entangled states. This simplifies experimental implementation, as monitoring the CHSH value $S$ alone suffices for accurate key rate estimation.

\begin{figure}[htbp]
    \centering
    \includegraphics[width=0.9\linewidth]{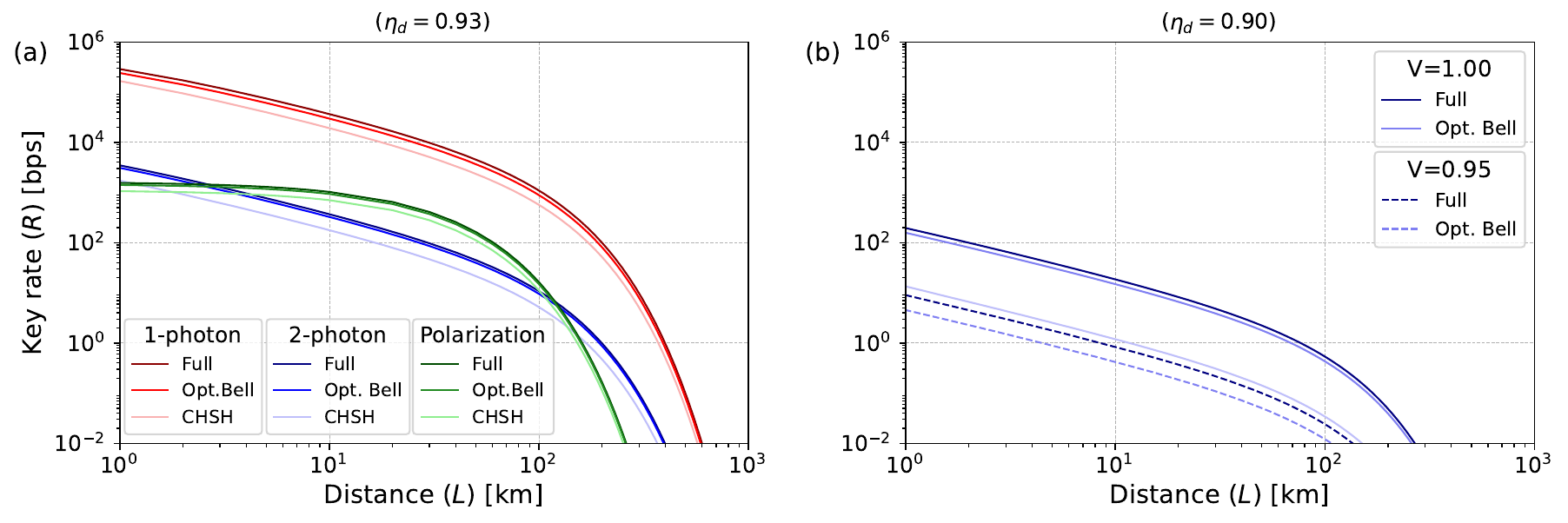}
    \caption{{Comparison of finite-size and asymptotic key rates $R$ versus distance $L$  under different certification approaches, assuming $N = {10^9}$ protocol rounds, repetition rate $f_{\text{rep}}=100$ MHz, optimal parametric gain $g$, computed for:
    (a) local detection efficiency $\eta_d = 93\%$ for the 1-photon protocol (red), the 2-photon protocol (blue), and the polarization-based protocol from Ref.~\cite{Oudot2024} (green), all for visibility $V=1$, 
    (b) local detection efficiency $\eta_d = 90\%$ for the 2-photon protocol, computed for visibilities $V=0.95$ (dashed lines) and 1 (solid lines).}}
    \label{fig:Finite_(Full,Bell,CHSH)}
\end{figure}

\end{document}